\documentclass[a4paper,UKenglish,cleveref, autoref, thm-restate]{lipics-v2021}

\usepackage{xspace}
\usepackage{graphicx} 
\usepackage{amsmath}
\usepackage[capitalise]{cleveref}
\usepackage{amsthm}
\usepackage{hyperref}
\usepackage{xcolor}
\usepackage{algorithm}
\usepackage{algpseudocode}
\usepackage{float}
\usepackage{subcaption}
\usepackage{algorithm}
\usepackage{thmtools}
\usepackage[para]{footmisc}
\usepackage{bm}
\usepackage{amsbsy}

\usepackage{framed}

\newcommand\algorithmicupon{\textbf{upon}}
\algblockdefx{Upon}{EndUpon}[1]{\algorithmicupon\ #1\ \textbf{do}}{}
\algnotext{EndUpon}

\crefname{ALG@line}{line}{lines}
\crefalias{ALG@line}{line}

\makeatletter
\AddToHook{env/algorithmic/before}{\def\@currentcounter{ALG@line}}
\makeatother
\nolinenumbers

\title{Amortized Asynchronous Byzantine Reliable Broadcast with Optimal Resilience} 

\author{Michael Yiqing Hu}{National University of Singapore,}{}{}{}

\author{Alvin Yang Hong Yao}{National University of Singapore, }{}{}{}

\author{Li Jialin}{National University of Singapore,}{}{}{}

\ccsdesc[500]{Theory of computation~Distributed algorithms}

\keywords{Byzantine Reliable Broadcast, Sampling Techniques, Amortization}

\begin{document}

\maketitle

\begin{abstract}
Byzantine Reliable Broadcast (BRB) is a fundamental primitive in distributed computing and cryptographic systems; reducing the communication cost of BRB thus remains an important research direction. 
However, most existing works either focus strictly on the synchronous network model or utilize computationally impractical erasure codes.
Therefore, to achieve a practical yet network-robust algorithm, one must turn toward committee sampling techniques. However, Committee sampling techniques often forgo optimal resilience ($f < \lfloor\frac{n}{3} \rfloor$) in the face of asynchrony. 

This work produces two interesting results: Firstly, we propose a \textit{randomly asynchronous} BRB protocol that can achieve both optimal resilience and asymptotically optimal communication complexity ($O(n|m|)$) through an underutilized technique: \textit{amortization}; and does not utilize computationally expensive \textit{erasure codes}.
Next, we show that an optimally resilient BRB protocol utilizing sampled committees cannot exist in a \textit{fully asynchronous} network.

\end{abstract}

\section{Introduction}
\label{sec:intro}

Byzantine Reliable Broadcast (BRB) serves as the cornerstone of many algorithms in distributed systems, ranging from Verifiable Secret Sharing (VSS)~\cite{Chor1985VerifiableSS,Canetti1993FastAB,Cachin2000RandomOI} and Multi-Party Computation (MPC)~\cite{Katz2006OnEC,Cramer2000GeneralSM}, to State Machine Replication (SMR)~\cite{Keidar2021AllYN,Danezis2021NarwhalAT,Giridharan2022BullsharkDB}.
The BRB primitive ensures that any of the $n$ nodes may disseminate a message $m$ to the remainder of the nodes in the presence of $f$ corrupt/Byzantine nodes, where Byzantine nodes may deviate arbitrarily from the protocol. BRB ensures that all honest (non-Byzantine) parties will agree on the sender's message.

Bracha's BRB protocol~\cite{Bracha1987AsynchronousBA} was the first algorithm that works in full asynchrony and tolerates $f<n/3$ Byzantine faults.
Bracha's solution is information-theoretically secure (assuming signatures) with a communication complexity of $O(n^2|m|)$ where $|m|$ is the size of the message in bits.
Dolev and Reischuk~\cite{Dolev1985BoundsOI} also demonstrated that any deterministic BRB protocol requires at least $\Omega(n^2)$ messages in the worst case.

Various works~\cite{Momose2021OptimalCC,Dolev1983AuthenticatedAF,Wan2023OnTA} explore greater fault tolerance or improved message complexity in the \textit{synchronous} setting.
While there also exist some recent improvements to the communication complexity in the asynchronous setting, they mainly capitalize on either erasure codes or committee sampling.

\subsection{Erasure Codes}
\label{sec:intro:ec}

AVID~\cite{Cachin2005AsynchronousVI} demonstrated how erasure codes might improve communication complexity to $O(n|m| + n^2k \log n)$ when utilized along with a collision-resistant hash function with an output size of $O(k)$.
These schemes typically decompose the original message into smaller fragments to reduce the total message cost to terminate.
Nayak et al.~\cite{Nayak2020ImprovedEP} further improved the communication complexity to $O(n|m|+n^2k)$, terminating in $7$ rounds.
Those bounds were then improved by Alhaddad et al.~\cite{Alhaddad2022BalancedBR}, addressing the uneven distribution of load borne by Das et al. \cite{Das2021AsynchronousDD} to $O(n|m| + nk + n^2 \log n)$.
Notably, AVID~\cite{Cachin2005AsynchronousVI} and Alhaddad et al.~\cite{Alhaddad2022BalancedBR} both claim to be asymptotically optimal if the message is sufficiently large: $|m| = \Omega (n \log n  k)$.

Although BRB optimizations utilizing coding schemes improve the worst-case communication complexity, the cost of all nodes fetching fragments in an attempt to reconstruct the message typically negates any cost savings~\cite{Locher2024ByzantineRB}.
Furthermore, overheads from encoding/decoding and verification degrade performance in real-world systems.
As a result, many practical implementations~\cite{Danezis2021NarwhalAT,Das2021AsynchronousDD,Giridharan2022BullsharkDB} \textit{avoid} coding schemes to optimize BRB communication.

\subsection{Committee Sampling}
\label{intro: comm sampling}

If the computational cost of erasure codes is prohibitive, 
an alternative line of work~\cite{Micali2016ALGORANDTE, Cohen2020NotAC, anikina2024dynamicprobabilisticreliablebroadcast, Shrestha2025TowardsIT, Wang2024PandoES} takes the approach of uniformly sampling a committee $N_c$ of smaller size $n_c \ll n$ and performing reliable broadcast within this subset, after which the result is disseminated to the remaining nodes. 
Since a sampling technique is employed, these algorithms typically succeed with probability at least $1-\varepsilon$  (where $\varepsilon$ is a negligible constant) rather than being purely deterministic.

However, these purely sampling based approaches~\cite{Wang2024PandoES, Cohen2020NotAC} face a fundamental trade-off between scalability and resilience. Because a small random sample is subject to statistical variance, maintaining a high probability of an honest majority within $N_c$ requires a larger margin of honest nodes (sub-optimal resilience) in the global population.
Consequently, protocols \cite{Guerraoui2019ScalableBR} that achieve sub-quadratic communication costs via sampling typically forfeit optimal Byzantine resilience $(n \geq 3f+1)$, while other sample-based approaches~\cite{anikina2024dynamicprobabilisticreliablebroadcast} revert to stronger forms of network synchrony.

Recently, Shrestha and Kate~\cite{Shrestha2025TowardsIT} suggested a hybrid adaptation of Bracha's two-phase BRB~\cite{Bracha1987AsynchronousBA}.
Their protocol first samples $N_c$ nodes while guaranteeing, with all but negligible probability $\varepsilon$, that $N_c$ contains at least $\frac{n_c}{2}$ honest nodes.
The sender communicates $m$ to all nodes in $N_c$, where a simple majority of confirmations ensures that some honest node in the system holds the message.
Subsequently, an all-to-all communication phase involving all $n$ nodes is triggered to establish global agreement that the first phase has completed.

Although this hybrid approach maintains optimal resilience and reduces the cost of the first phase, it does not address the quadratic bottleneck of the second phase, yielding a worst-case total communication cost of $O(n_c^2|m| + n^2k + n_cn|m|)$.



\subsection{Amortized Costs}
\label{sec:intro:amortize}

As illustrated by Wan et al.~\cite{Wan2023OnTA}, amortization remains a largely overlooked technique for improving multi-shot BRB, despite its potential to significantly reduce average system costs over sufficiently long protocol executions.
Their work demonstrates that amortized $O(nk)$ communication complexity in $O(n)$ rounds are achievable in the presence of an honest majority and network synchrony.

Building on this foundation, Civit et al.~\cite{Civit2025RepeatedAI} achieve similar bounds by leveraging a blacklisting approach that progressively excludes Byzantine behavior from the system.
Notably, both techniques~\cite{Wan2023OnTA,Civit2025RepeatedAI} require strong network synchrony to operate.
As further noted in \cite{Civit2025RepeatedAI,Civit2023EveryBC}, amortization analysis for multi-shot tasks remains notably scarce in the literature, particularly in asynchronous settings.

\subsection{The Random Asynchronous Network Model}
\label{sec:intro:randomly async}

A recent work by Danezis et al.~\cite{randomasynch} suggested a novel relaxation of the classic asynchronous network model: the \textit{Random Asynchronous model}. In this network model, messages still preserve an unknown, unbounded delay. However, there does not exist an adversary that may dictate message order; instead, message delivery follows a random schedule. 

Notably, this network model is provably~\cite{randomasynch} \textit{slightly stronger} than full asynchrony, but \textit{slightly weaker} than synchrony. It is also worth mentioning that the random asynchronous model is incomparable to \textit{partial synchrony}. Danezis et al.~\cite{randomasynch} argue that this model is more inline with modern practical systems that typically assume non-adversarial scheduling.

We show in \cref{sec: Impossibility} that an optimally resilient, probabilistic BRB that utilizes smaller $(o(n))$ committees, to exclusively gather acknowledgments to assert that a BRB instance has safely completed, cannot exist in network asynchrony. Therefore, our algorithm utilizes the slightly relaxed \textit{random asynchronous model}.


\begin{figure}[t]
\centering
\scalebox{0.8}{
\begin{tabular}{ccccc}
\hline
Protocol & Network Model & Fault Tolerance & Communication Cost & Technique\\
\hline

Bracha~\cite{Bracha1987AsynchronousBA} & Asynchronous & $n=3f+1$ & $O(n^2|m|)$ & - \\

Chachin-Tessaro~\cite{Cachin2005AsynchronousVI} & Asynchronous & 
$n= 3f+1$ & $O(n|m|+n^2k \log n)$ & Erasure Codes \\

Alhaddad et al.\cite{Alhaddad2022BalancedBR} & Asynchronous & $n=3f+1$& $O(n|m|+nk+n^2\log n)$ & Erasure Codes \\

Wan et al.\cite{Wan2023OnTA} & Synchronous & $n =2f+1$ & $O(nk|m|+n^3k)$ & Amortization \\

Civit et al.\cite{Civit2025RepeatedAI} & Synchronous & $n =2f+1$ & $O(n|m|+nk)$ & Amoritzation \\

Anikina et al.\cite{anikina2024dynamicprobabilisticreliablebroadcast} & Synchronous & $n=3f+1$ & $O(n\log n |m|)$ & Sampling \\

Guerraoui et al.\cite{Guerraoui2019ScalableBR} & Asynchronous & $n\approx6f+1$
 & $O(n\log n |m|)$ & Sampling \\

\hline

\textbf{This work }& Random Asynch. & $n=3f+1$ & $O(n|m|+n^2k)$ & Amortization, Sampling \\

\hline
\end{tabular}
}
\caption{Fault tolerance and communication performance of various BRB protocols, with $|m|$ message size and $k$ bit cryptographic objects. }
\end{figure}

\subsection{Our Contribution}
\label{sec:intro:contributions}

This work explores the following questions:
\begin{enumerate}
    \item \textit{Is there a communication-efficient BRB that operates in the Random Asynchronous network model without utilizing erasure codes?} 
    \item \textit{How far can we push sampling techniques to be utilized in BRB?}
\end{enumerate}

In this paper, we present an amortized probabilistic multi-shot Byzantine Reliable Broadcast algorithm with optimal resilience for the random asynchronous network setting. Our algorithm melds sampling and amortization techniques together, allowing us to yield an amortized cost of $O(n|m|+n^2k)$, where if message size exceeds metadata size ($|m| = \Omega( nk)$), it results in an \textit{asymptotically optimal} communication complexity of $O(n|m|)$.

To the best of our knowledge, our algorithm is the \textit{first} random asynchronous BRB algorithm utilizing sampling that achieves \textit{asymptotically optimal} message complexity while having \textit{optimal resilience} to Byzantine faults. 
This puts it on par with prior works that utilize erasure codes \cite{Alhaddad2022BalancedBR,Das2021AsynchronousDD}.

Our algorithm has a worst-case round cost of $O(n)$ similar to Wan et al.~\cite{Wan2023OnTA}. However, we also provide an optimistic delivery condition that terminates in a single round $(O(1))$ whilst maintaining the same amortized cost. 

\paragraph{Overview of the Paper}

We will begin by stating the system model and preliminaries in \cref{sec: prelims}, followed by an examination of Bracha's~\cite{Bracha1987AsynchronousBA} classic solution in \cref{sec: Case study Brachas} before illustrating our solution in \cref{sec: our algo}. We then prove its correctness in \cref{sec: proofs} and show our algorithm's communication complexity in \cref{sec: complexity analysis}. 
In \cref{sec: Impossibility}, we prove an impossibility result: an optimally resilient, probabilistic BRB that utilizes small $(o(n))$ committees, to exclusively gather acknowledgments to assert that a BRB instance has safely completed, cannot exist in network asynchrony.
We include a discussion on the practical applications of our work in \cref{sec: discussion}, before finally concluding in \cref{sec: conclusion}.

\section{Preliminaries}
\label{sec: prelims}

\subsection{Model and Assumptions}
\label{sec:prelims:model}

We consider a set of $n$ nodes: $\Pi = \{P_1,\cdots,P_n\}$, each attempting to reliably \textit{broadcast a stream} of messages.
A \textit{static} adversary may corrupt $f<\frac{n}{3}$ nodes prior to protocol execution.
A static adversary is conventional and necessary~\cite{abraham19sampling} for BRB algorithms that utilize sampling with unbounded message latency. 
Anikina et al.\cite{anikina2024dynamicprobabilisticreliablebroadcast} enhanced their fault model to one that tolerates a slowly adaptive adversary but relies heavily on network synchrony. 
A non-corrupted party abides by the protocol and is considered to be \textit{honest}; they are considered \textit{Byzantine} otherwise. 
Nodes communicate with each other through message passing via a \textit{random asynchronous} network~\cite{randomasynch}: where messages have unbounded delay but must be delivered eventually. Furthermore, adversarial message scheduling is not possible, and message delivery follows a random schedule; we show in \cref{sec: Impossibility} why this is necessary.

We also make the following assumptions:
\begin{itemize}
    \item A public-key infrastructure (PKI) that supports a digital signature scheme, where every signature can be authenticated.

    \item A collision resistant hash function $\mathcal{H}(x)$ that takes an input $x$ and generates a fixed sized digest of length $k$.

    \item A $(t,N')$ threshold signature scheme~\cite{Boneh2018CompactMF,Boldyreva2002EfficientTS} set up via a trusted dealer or Distributed Key Generation (DKG)~\cite{Boneh2018CompactMF}.
    Each node in $N'$ may generate a signature share $\langle m\rangle_i$ on a message.
    A set of $t$ distinct signature shares on the same message $m$ can be merged to form $cert(m): |cert(m)| = |\langle m\rangle_*| = k$.

    \item The adversary is computationally-bounded and is unable to break the digital signature scheme or the hash function.
\end{itemize}

\subsection{Problem Definition}

This work focuses on the problem of \textit{Probabilistic Multi-shot Byzantine Reliable Broadcast}, which consists of a sequence of single-shot Byzantine reliable broadcasts~\cite{Bracha1987AsynchronousBA} (each with a success probability at least $1-\varepsilon$) from a given node.

\begin{definition}[Byzantine Reliable Broadcast (BRB)]
    A Byzantine Reliable Broadcast with a designated sender and input value $m$, where all honest nodes call $\textsc{Deliver}(m)$, must satisfy the following properties:

    \begin{itemize}
        \item \textbf{Agreement:} If two distinct honest nodes call $\textsc{Deliver}(m')$ and $\textsc{Deliver}(m'')$, respectively, then $m' = m''$.

        \item \textbf{Validity:} If the designated sender is honest, then all honest nodes eventually call $\textsc{Deliver}(m)$.

        \item \textbf{Totality:} If an honest node calls $\textsc{Deliver}(m)$, all honest nodes eventually call $\textsc{Deliver}(m)$.
    \end{itemize}
\end{definition}

We now define our central goal, achieving a Probabilistic Multi-shot Byzantine Reliable Broadcast algorithm:

\begin{definition}[Probabilistic Multi-shot Byzantine Reliable Broadcast ]
\label{def: PMBRB}
    A Probabilistic Multi-shot Byzantine Reliable Broadcast with a designated sender $P_i$ with input value $m$ for some round $r \in \mathbb{N}$, where all honest nodes call $\textsc{Deliver}_i(m,r)$, satisfies the following properties with probability at least $1-\varepsilon$ (where $\varepsilon$ is a negligible constant):

    \begin{itemize}
        \item \textbf{Agreement:} If two distinct honest nodes call $\textsc{Deliver}_i(m',r)$ and $\textsc{Deliver}_i(m'',r)$, respectively, then $m' = m''$.

        \item \textbf{Validity:} If the designated sender is honest, then all honest nodes eventually call \\ $\textsc{Deliver}_i(m,r)$.

        \item \textbf{Totality:} If an honest node calls $\textsc{Deliver}_i(m,r)$, all honest nodes eventually call \\ $\textsc{Deliver}_i(m,r)$.


        \item \textbf{Completion Sequentiality:} If an honest node calls $\textsc{Deliver}_i(\cdot,r)$ where $r>1$, then it must be able to call $\textsc{Deliver}_i(\cdot,r-1)$ eventually.
    \end{itemize}

\end{definition}

Note that our definition deviates slightly from the multi-shot BRB definition of Wan et al.~\cite{Wan2023OnTA}. Specifically, we define our broadcast protocol around a designated sender, whereas \cite{Wan2023OnTA} declares a series of ``slots'' (analogous to our \textit{rounds}) that may be filled by possibly distinct senders.

However, to maintain the sequential and causal broadcast invocations required by many cryptographic protocols that assume a \textit{broadcast channel}~\cite{Gennaro1999SecureDK, Pedersen1991ATC, BenOr1988CompletenessTF}, we retain a slightly modified notion of \textit{Sequentiality}~\cite{Wan2023OnTA}.
In the original formulation, sequentiality ensures that a particular instance of BRB for a round must terminate before BRB for the next round \textit{may begin}.
We modify this property to instead ensure that broadcasts by a given node must terminate in strictly sequential rounds---that is, rounds $1, 2, 3, \dots$ in ascending order, with no reordering permitted. 

\paragraph{Amortized Complexity}
Communication complexity is measured as the number of bits sent by honest nodes.
In this work, we adapt the definition of amortized complexity from Wan et al.~\cite{Wan2023OnTA}.


\begin{definition}[Amortized Communication Complexity]
\label{def: amortized}
    Let $C(r,n,f)$ be the communication complexity of a Probabilistic Multi-shot Byzantine Reliable Broadcast protocol for a designated sender $p_i \in \Pi,|\Pi|=n$, where some honest node calls $\textsc{Deliver}_i(\cdot,r)$ in the presence of $f$ Byzantine nodes.
    The amortized communication complexity of the protocol is given by
\[
\frac{C(r,n,f)}{r}
\]
      

\end{definition}



\section{Case Study of Bracha's BRB}
\label{sec: Case study Brachas}

Bracha's BRB~\cite{Bracha1987AsynchronousBA} operates in two distinct phases.
Notably, most BRB algorithms~\cite{Alhaddad2022BalancedBR,Wang2024PandoES,Shrestha2025TowardsIT,Nayak2020ImprovedEP,Cachin2005AsynchronousVI,Das2021AsynchronousDD,Guerraoui2019ScalableBR} utilize a two-phased structure.
Each phase serves a distinct purpose that cannot be naively eliminated.

In the first phase, the sender broadcasts a message $m$ to all nodes, and each recipient responds by broadcasting an \texttt{echo} message. Once a node receives $n-f$ echo messages, it can guarantee that at least a simple majority of honest nodes have seen the message. However, a single phase is insufficient: due to network asynchrony, only some honest nodes may receive $n-f$ \texttt{echo} messages, leading to inconsistent delivery decisions. Crucially, the first phase prevents equivocation; the quorum intersection property ensures that two distinct messages cannot both complete this phase.

This inadequacy necessitates a second, near-identical phase. Once a node receives $n-f$ \texttt{echo} messages or $f+1$ \texttt{vote} messages, it broadcasts a \texttt{vote} message. Upon receiving $n-f$ vote messages, an honest node will deliver the originally broadcast message $m$.
This ensures that $f+1$ honest nodes have already broadcast a \texttt{vote} message, enabling all honest nodes to eventually deliver, ensuring totality.

\subsection{BRB with a single phase?}

Recent works~\cite{anikina2024dynamicprobabilisticreliablebroadcast, Shrestha2025TowardsIT,Wang2024PandoES,Cohen2020NotAC} on sampling in BRB have improved communication complexity by targeting the first (\texttt{echo}) phase. As described in \cref{intro: comm sampling}, a smaller subset $n_c$ is sampled to provide a probabilistic guarantee that at least one honest node in the system will see the message, even under asynchrony. The responsibility for non-equivocation is then deferred to the subsequent phase.

However, the second (\texttt{vote}) phase, which typically requires a $2f+1$ super-majority is often left untouched. Improving the communication complexity of this second phase is also non-trivial. Committee sampling techniques cannot be applied naively, as a committee with an honest supermajority converges to a size of $n$ when operating with optimal resilience.
Consequently, works~\cite{Wang2024PandoES, Cohen2020NotAC} that do improve this phase via committee sampling typically operate with sub-optimal resilience. For example, Cohen et al.~\cite{Cohen2020NotAC} require $n \approx 6f+1$.

\section{Our approach: Amortized Probabilistic Multi-shot Byzantine Reliable Broadcast}
\label{sec: our algo}

Our approach does not aim to improve the second phase. Instead, we aim to remove it entirely by utilizing a novel sampling-amortization technique. 

Our key insight is as follows: 
\begin{enumerate}
    \item Suppose we sample a committee $N_c$ of size $n_c$ from $\Pi$, such that with at least $1-\varepsilon$ probability, $N_c$ contains a simple majority of honest nodes. 
    \item Getting an acknowledgment response for a message $m$ from a simple $\frac{n_c}{2}+1$ quorum ensures at least a single honest node has witnessed the message. This also ensures that the message is available to all other nodes eventually. 
    \item Repeating item 1 with another unique committee $N'_c$ on the same message $m$ under the random asynchronous model ensures there is a $\frac{2f}{2f+1}$ probability that at least 2 distinct honest nodes have witnessed and asserted that there exists a single unique message $m$.
    \item By repeating this process sufficiently many ($\varphi$) times, we can guarantee with probability at least $1-\varepsilon$ that $f+1$ \textit{distinct} honest nodes have made the same assertion, ensuring non-equivocation on the message.
    \item Furthermore, each new subsequent committee sampling can be piggy-backed to announce a brand new message. This allows amortizing the cost of each BRB instance to a single round of item 1. 
\end{enumerate}

\subsection{Description of APM-BRB}

\begin{algorithm}[t!]
\footnotesize
\caption{Data structures and basic utilities for $P_i$}
\label{algo:our_algo}
\begin{algorithmic}[1]
\State \textbf{Local variables:}
\State \quad \text{triggers}[]: array of $n$ pairs of certificates and round numbers, initialized with \texttt{null}. \label{algo:our_algo:trigger}
\State \quad \text{promises}[][]: array of $n$ lists where promises[$j$] is a sequential list of consecutive messages ordered by round from $P_j$.\label{algo:our_algo:promises}
\State \quad \text{seed}$_0$: value provided by a trusted dealer at protocol start.
\Statex
\Function{sampleCommittee}{$\text{node\_id}, r$}
    \State \textbf{Input:} Node identifier $\text{node\_id}$, round $r$ \quad \textbf{Output:} Subset $N_c \subseteq \Pi$ of size $n_c$
    \State $\text{seed} \leftarrow \mathcal{H}(\text{seed}_0 \| r \bmod \varphi \| \text{node\_id})$; $N_c \leftarrow \emptyset$ 
    \Comment{Committee repeats every $\varphi$ rounds}
    \label{line: our_algo: repeat comm}
    \For{$k = 1$ \textbf{to} $n_c$}
        \State $\text{idx} \leftarrow (\text{seed} \bmod n) + 1$ \Comment{Map seed to node index in $[1, n]$}
        \State $N_c \leftarrow N_c \cup \{P_{\text{idx}}\}$; $\text{seed} \leftarrow \mathcal{H}(\text{seed})$ \Comment{Hash seed for next selection}
    \EndFor
    \State \Return $N_c$
\EndFunction
\Statex
\Function{fetchMsg}{$r, i, \text{useCommittee}$}
    \State $t \leftarrow |\text{promises}[i]|$ \Comment{Current number of messages from $P_i$}
    \For{$y = t+1$ \textbf{to} $r$}
        \State $N_c \leftarrow \texttt{useCommittee} \,?\, \textsc{sampleCommittee}(i, y) : \{P_i\}$
        \If{$y = 1$}
            \State \textbf{request} $m_y^i$ from $N_c$
        \Else
            \If{$y = r$ \textbf{and} \textbf{not} \texttt{useCommittee}} \Comment{Last message, also ask for valid certificate}
                \State \textbf{request} $m_y^i$ from $N_c$ with valid $\text{cert}(m_{y-1}^i) \in m_y^i$ and valid $\text{cert}(m_y^i)$
            \Else
                \State \textbf{request} $m_y^i$ from $N_c$ with valid $\text{cert}(m_{y-1}^i) \in m_y^i$
            \EndIf
        \EndIf
        \State \textbf{wait} until $m_y^i$ received
        \If{$y = r$ \textbf{and} \textbf{not} \texttt{useCommittee}}
            \State \textbf{wait} until $\text{cert}(m_y^i)$ received
        \EndIf
        \State $\text{promises}[i][y] \leftarrow m_y^i$ \label{line: promises_1}
    \EndFor
    \State \Return $\text{promises}[i][r]$
\EndFunction
\Statex
\Function{syncMsg}{$r, i$} \quad \Return \textsc{fetchMsg}$(r, i, \texttt{true})$ \EndFunction
\Statex
\Function{reqMsg}{$r, i$} \quad \Return \textsc{fetchMsg}$(r, i, \texttt{false})$ \EndFunction

\end{algorithmic}
\end{algorithm}
\begin{algorithm}[t]
\footnotesize
\caption{APM-BRB (Sending Routine)}
\label{algo:our_algo_send}
\begin{algorithmic}[1]
\State \textbf{For a sender $P_i$ with input $m_r$ for round $r$:} \Comment{Message sent by each node every round}
\If{$r = 1$}
    \State $N_c \leftarrow \textsc{sampleCommittee}(i, 1)$
    \State $m_1^i \leftarrow \{\text{\textit{message content}}, i, \cdot, \cdot, \text{triggers}[]\}$
    \State \textbf{send} $(m_1^i, r, i)$ to all parties in $N_c$
\Else
    \State $N_c \leftarrow \textsc{sampleCommittee}(i, r)$
    \State \textbf{invoke} $\textsc{reqMsg}(r-1, j)$ for all $j \in [n]$ concurrently; \textbf{wait for} $2f+1$ to complete
    \State $\mathcal{J} \leftarrow$ the set of $2f+1$ nodes, $P_j$ whose $\textsc{reqMsg}(r-1, j)$ completed first
    \State $\text{prev}\{\} \leftarrow \{\text{cert}(\text{promises}[j][r-1]) : j \in \mathcal{J}\}$ \Comment{$2f+1$ certs from distinct nodes in promises[]}
    \State $m_r^i \leftarrow \{\text{message}, i, \text{cert}(m_{r-1}^i), \text{prev}\{\}, \text{triggers}[]\}$ \label{algo:our_algo_send: message params}
    \State \textbf{send} $(m_r^i, r, i)$ to all parties in $N_c$
\EndIf

\State
\State \textbf{For a node $P_j$ (each node in $n$):}
\Upon{receiving $(m_r^i, r, i)$}
    \State $N_c^* \leftarrow \textsc{sampleCommittee}(i, r)$
    \If{$P_j \notin N_c^*$ \textbf{or} already received $(m'^i_r, r)$ s.t. $m'^i_r \neq m_r^i$}
        \State \textbf{ignore}
    \Else
        \If{$\text{cert}(m_{r-1}^i)$ is the $(\frac{n_c}{2}+1, N_c)$ threshold signature of $\textsc{syncMsg}(r-1, i)$} \label{algo:our_algo_send: check extend}
            \State \textbf{send} $(m_r^i, r, i)$ to $N_c^*$ \label{line:resend}
            \State \textbf{send} $(\langle m^i_r \rangle_j, r, i, j)$ to all nodes in $n$ \label{algo:our_algo_send:signature share}
        \Else
            \State \textbf{ignore} \Comment{Ignore if equivocating with local memory}
        \EndIf
    \EndIf
\EndUpon
\end{algorithmic}
\end{algorithm}
\begin{algorithm}[t]
\footnotesize
\caption{APM-BRB (Deliver Condition 1)}
\label{algo:deliver_part1}
\begin{algorithmic}[1]
\State \textbf{For each node in $\Pi$:}
\Upon{receiving $(\langle m^i_r \rangle_j, r, i, j)$ from $\frac{n_c}{2}+1$ distinct nodes in $N_c \leftarrow \textsc{sampleCommittee}(i, r)$ \textbf{or} \text{cert}$(m_r^i)$ from $P_i$} 
    \State $\text{cert}(m_r^i) \leftarrow$ is a valid $(\frac{n_c}{2}+1, N_c)$ threshold signature for $m_r^i$
    \State \textsc{processCertificate}($i, r, \text{cert}(m_r^i)$) \label{algo:deliver_part1:process 1}
\EndUpon
\Upon{\textsc{Deliver}$_i(m, r)$} 
    \For{each $(\text{cert}(m_{r'}^j), r') \in m.\text{triggers}[j]$ that is not \texttt{null}} \label{line:trigger_2}
        \State \textsc{processCertificate}($j, r', \text{cert}(m_{r'}^j)$) \label{algo:deliver_part1:process 2}
    \EndFor
\EndUpon
\State
\Procedure{processCertificate}{$i, r, cert$}
    \If{$r \geq \varphi$}
        \State $v \leftarrow r$; $tempCert \leftarrow cert$
        \If{$\text{triggers}[i] = \texttt{null}$ \textbf{or} round in $\text{triggers}[i] < r$}
            \State $\text{triggers}[i] \leftarrow (cert, r)$ \label{line: trigger_1}
        \EndIf
        \While{$v \geq r - \varphi$} \Comment{Work backwards to get the certificate that should be delivered} \label{line:deliver 1_1}
            \State $N_c \leftarrow \textsc{sampleCommittee}(i, v)$
            \State obtain $m_v^i$ from $N_c$ s.t. $tempCert$ is a $(\frac{n_c}{2}+1, N_c)$ threshold signature for $m_v^i$
            \State $tempCert \leftarrow \text{cert}(m_{v-1}^i) \in m_v^i$; $v \leftarrow v - 1$
        \EndWhile  \label{line:deliver 1_2}
        \For{$u = r-\varphi$ \textbf{down to} $1$} \Comment{Deliver everything that came before the certificate} \label{line: totality_1}
            \State $N_c \leftarrow \textsc{sampleCommittee}(i, u)$
            \State obtain $m_u^i$ from $N_c$ s.t. $tempCert$ is a $(\frac{n_c}{2}+1, N_c)$ threshold signature for $m_u^i$
            \State \textsc{Deliver}$_i(m_u^i, u)$ if not already delivered
            \State $tempCert \leftarrow \text{cert}(m_{u-1}^i) \in m_u^i$
        \EndFor
    \EndIf
\EndProcedure
\end{algorithmic}
\end{algorithm}
\begin{algorithm}[t]
\footnotesize
\caption{APM-BRB (Deliver Condition 2)}
\label{algo:deliver_part2}
\begin{algorithmic}[1]
\State \textbf{For each node in $\Pi$}
\State $\mathcal{M} \leftarrow \emptyset$
\Upon{obtaining $\text{cert}(m_r^j)$ for each $P_j \in \Pi$ in round $r$}
    \State $N_c \leftarrow \textsc{sampleCommittee}(j, r)$
    \State Collect $m_r^j$ from $N_c$ s.t. $\text{cert}(m_r^j)$ is the $(\frac{n_c}{2}+1, N_c)$ threshold signature for $m_r^j$
    \State $\mathcal{M} \leftarrow \mathcal{M} \cup \{m_r^j\}$
    \If{$|\mathcal{M}| = n$}
        \For{each $\text{cert}(m^i_{r-1})$ that exists in $\text{prev}\{\}$ of all $m \in \mathcal{M}$}
            \For{$u = r-1$ \textbf{down to} $1$}
                \State \textsc{Deliver}$_i(m_u^{i}, u)$ if not already delivered \label{algo:deliver_part2: con 1}
            \EndFor
        \EndFor
    \EndIf
\EndUpon
\State
\State \textbf{For each node in $\Pi$}
\Upon{\textsc{Deliver}$_j(m^j_r, r)$ for $2f+1$ unique nodes $P_j$}
    \For{each $\text{cert}(m_{r-1}^i)$ that appears in $\text{prev}\{\} \in m_r^j$ of all $2f+1$ messages $m_r^j$} \label{line: totality_2}
        \State $tempCert \leftarrow \text{cert}(m_{r-1}^i)$
        \For{$u = r-1$ \textbf{down to} $1$}
            \State $N_c \leftarrow \textsc{sampleCommittee}(i, u)$
            \State obtain $m_u^i$ from $N_c$ s.t. $tempCert$ is the $(\frac{n_c}{2}+1, N_c)$ threshold signature for $m_u^i$
            \State \textsc{Deliver}$_i(m_u^{i}, u)$ if not already delivered \label{algo:deliver_part2: con 2}
            \State $tempCert \leftarrow \text{cert}(m_{u-1}^i) \in m_u^i$
        \EndFor
    \EndFor
\EndUpon
\end{algorithmic}
\end{algorithm}

We dub our algorithm Amortized Probabilistic Multi-shot BRB (APM-BRB), and describe it in parts: \cref{algo:our_algo,algo:our_algo_send,algo:deliver_part1,algo:deliver_part2}.

The algorithm progresses in rounds, where each node $P_i$ attempts to reliably broadcast their message $m_r^i$ to $\Pi$. 

\subsubsection{Sending Operation}
Besides round $r=1$, in every other round $(r>1)$, each sender $P_i$ sends a message $m^i_r$ only when it has obtained a $(\frac{n_c}{2}+1, N_c)$ threshold signature (or cert($m^i_{r-1}$)) for the message it has sent in the previous round (\cref{algo:our_algo_send: message params} of \cref{algo:our_algo_send}); this is generated by obtaining $\frac{n_c}{2}+1$ unique signature shares from nodes from $\textsc{sampleCommittee}(i,r-1)$ (\cref{algo:our_algo_send:signature share} of \cref{algo:our_algo_send}).
As a result, each honest node $P_i$ creates a causal chain of messages, where each message $m_r^i$ references the message it has sent in round $r-1$ respectively.

Each message also contains some additional metadata (prev$\{\}$,triggers$[]$) of size $O(nk)$.
In $m^i_r$, prev$\{\}$ contains certificates from $2f+1$ distinct nodes in the  immediate previous round that has a causal chain consistent with its own local memory (promises[][]). Essentially, prev$\{\}$ is a signal to other nodes to show which causal chains (if equivocation exists) with certificates in round $r-1$, that a particular honest node $P_i$ will extend exclusively. 
On the other hand, triggers$[]$ serves as a vehicle for honest nodes to indicate to other nodes messages they have called $\textsc{Deliver}(\cdot,\cdot)$ on; this is crucial for totality. 

When an honest node $P_j$ receives a message $(m^i_r,r,i)$, it checks whether it should participate in this committee by invoking the public $\textsc{sampleCommittee}()$ function (\cref{algo:our_algo}). 
If $P_j$ has not already received another message from $P_i$ for the same round, it will perform another check utilizing the $\textsc{syncMsg}()$ function (\cref{algo:our_algo}). 
The $\textsc{syncMsg}()$ function updates the promises[] memory object, and returns a message to check if the received message indeed extends the causal chain as seen in promises$[i][r-1]$ (\cref{algo:our_algo_send: check extend} of \cref{algo:our_algo_send}). 

Importantly, once an honest node writes a particular message into its local memory object promises$[i][r]$, it becomes immutable. This immutability ensures that an honest node will not extend causal chains that it has deemed to equivocate within its local memory. 
In other words, Byzantine nodes might create distinct chains (since it is possible to obtain $\frac{n_c}{2}+1$ unique certificates every round); however, each honest node will only aid in extending one of them. 

Before sending a signed acknowledgment for the message (\cref{algo:our_algo_send:signature share} of \cref{algo:our_algo_send}), the honest node also re-sends the message to the committee. This action prevents a byzantine sender from selectively choosing which honest nodes in the committee may participate in the creation of the cert(). 
If the byzantine node performs such an action, it is akin to a network adversary controlling the order in which messages may be delivered. 
We show in \cref{thm: Impossibility} that if this is possible, agreement cannot be achieved. 

\subsubsection{Delivery Mechanism}

APM-BRB has two possible $\textsc{Deliver}(\cdot,\cdot)$ conditions. The first condition describes the common case (\cref{algo:deliver_part1}):
Once $P_j$ receives a certificate (cert$(m^i_{r})$) (via \cref{algo:deliver_part1:process 1,algo:deliver_part1:process 2} of \cref{algo:deliver_part1}) for a message sent by $P_i$ in round $r:r\geq \varphi$. It calls $\textsc{processCertificate}()$, which works backwards to derive the message for $m_{r-\varphi}^i$. Regardless of whether this chain equivocates with $P_j$'s promises$[i]$. Since such a certificate exists, it indicates that with probability at least $1-\varepsilon$, a set of $f+1$ unique honest nodes have endorsed $m_{r-\varphi}^i$ by having it in their respective promises$[i][r-\varphi]$.

This is sufficient to deliver $m_{r-\varphi}^i$ (more precisely to call $\textsc{Deliver}_i(m_{r-\varphi}^i,r-\varphi)$), as by the quorum intersection property, another set of $f+1$ honest nodes that endorses another message from $P_i$ for round $r-\varphi$  cannot exist.

The second condition describes an optimistic case (\cref{algo:deliver_part2}): When an honest node obtains a message from each of the $n$ nodes in a single round (for a total of $n$ messages), and discovers that a particular message $m_{r-1}^j$ is referenced in every set of prev$\{\}$. The node can be sure that no other message from $P_j$ in round $r-1$ may receive the sufficient $\varphi$ chain of certificates to be delivered, ensuring that if any node calls $\textsc{Deliver}_j(m',r-1)$, then $m'=m_{r-1}^j$. Upon witnessing these conditions, a node can simply call $\textsc{Deliver}_j(m_{r-1}^j,r-1)$ (\cref{algo:deliver_part2: con 1} in \cref{algo:deliver_part2}).

Since every honest node's message will be delivered by all honest nodes eventually, all nodes will eventually witness $m_{r-1}^j$ in the prev$\{\}$ of at least $2f+1$ delivered messages from round $r$. Therefore, if an honest node has delivered a message via \cref{algo:deliver_part2: con 1} of \cref{algo:deliver_part2}, then all honest nodes will eventually be able to deliver the same message with \cref{algo:deliver_part2: con 2} in \cref{algo:deliver_part2}. 
\section{Correctness of APM-BRB}
\label{sec: proofs}

In this section, we will formally prove that APM-BRB satisfies \cref{def: PMBRB}, assuming that $\varphi$ and $n_c$ are chosen sufficiently. We will demonstrate discrete pessimistic bounds for the values of $\varphi, n_c$ in \cref{thm:success-prob}. Therefore for simplicity, We will assume throughout the rest of this section that for every committee sampled has a majority of honest nodes. We also assume that for any round $r$, for all rounds in between $r$ and $r + \varphi$, at least $f+1$ honest nodes sign at least one message in these rounds. We will prove later that these assumptions will hold with probability at least $1 - \varepsilon$.



\begin{lemma}
\label{lemma: agreement}
    Deliver Condition 1 of APM-BRB satisfies Agreement with probability at least $1-\varepsilon$.
\end{lemma}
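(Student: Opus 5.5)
We argue by contradiction. Suppose two honest nodes $P_a$ and $P_b$ call $\textsc{Deliver}_i(m',u)$ and $\textsc{Deliver}_i(m'',u)$ via Deliver Condition 1 with $m' \neq m''$. We work under the standing assumptions of this section: every sampled committee has an honest majority, and in any window of $\varphi$ consecutive rounds at least $f+1$ distinct honest nodes sign. The plan is to show that each delivery certifies an endorsement of its message by a set of $f+1$ distinct honest nodes, and then derive a contradiction from these two sets.

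\textbf{Step 1: the certified chains.} We unwind \textsc{processCertificate}. A delivery of $m_u^i$ at $P_a$ happens only from some certificate $\text{cert}(m_r^i)$ with $r \geq u+\varphi$, and \cref{line:deliver 1_1} to \cref{line:deliver 1_2} and \cref{line: totality_1} walk backwards through a hash-and-certificate linked chain $m_r^i, m_{r-1}^i, \dots, m_u^i$. Each link $m_v^i$ carries a valid $(\frac{n_c}{2}+1, N_c)$ threshold signature from $N_c = \textsc{sampleCommittee}(i,v)$. By unforgeability of the threshold signature and collision resistance of $\mathcal{H}$, each such certificate requires $\frac{n_c}{2}+1$ shares from committee members. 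Since the committee has an honest majority, at least one honest member $P_j$ signed $m_v^i$.

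\textbf{Step 2: what a signature implies.} By \cref{algo:our_algo_send: check extend}, an honest $P_j$ signs $m_v^i$ only after $\textsc{syncMsg}(v-1,i)$ has filled $\text{promises}[i][1..v-1]$ with a chain whose round-$(v-1)$ entry is the one certified inside $m_v^i$. Immutability of $\text{promises}$ then gives $\text{promises}[i][u] = m'$, since the chain is linked through certificates back to round $u$. So every honest signer of any link in rounds $u+1,\dots,u+\varphi$ of $P_a$'s chain has $m'$ recorded at position $u$, and it never changes that entry. By the window assumption, these signers number at least $f+1$ distinct honest nodes; call this set $S'$. This step is where the assumption does its work: it promotes ``one honest signer per round'' to ``$f+1$ distinct honest signers across the window''.

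\textbf{Step 3: the contradiction, and the main obstacle.} Define $S''$ for $m''$ in the same way. The difficulty is that two disjoint sets of $f+1$ honest nodes are not contradictory by counting alone, since $2(f+1) \le 2f+1$ honest nodes may exist. The contradiction must therefore come from the randomness of committee selection under the random asynchronous schedule, which is the content of key insight items 3--4. Each honest committee member re-sends the message to the committee (\cref{line:resend}), so honest members of a committee learn the first-delivered version, and under the random schedule they cannot be partitioned adversarially. My first attempt is to show that, with the window assumption, the signer sets $S'$ and $S''$ must intersect in an honest node. That node would hold both $m'$ and $m''$ in $\text{promises}[i][u]$, contradicting immutability. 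If the window assumption as stated only yields $f+1$, I will instead strengthen it probabilistically to "a majority of honest nodes" within the $\varphi$-window, deferring to the bound of \cref{thm:success-prob}. In either form, a union bound over the failure events gives agreement with probability at least $1-\varepsilon$.
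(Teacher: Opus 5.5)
\textbf{Gap in Step 3.} Your Steps 1 and 2 are essentially the paper's argument. You walk the certificate chain back to round $u$. Every honest signer of a link in rounds $u+1,\dots,u+\varphi$ passed the \textsc{syncMsg} check, so it has that chain's round-$u$ message fixed in $\text{promises}[i][u]$. The standing window assumption then gives at least $f+1$ distinct such honest nodes for each delivered value. The gap is in Step 3, and it comes from an arithmetic slip. You say two disjoint sets of $f+1$ honest nodes are not contradictory because $2(f+1) \le 2f+1$, but $2(f+1) = 2f+2 > 2f+1$. The paper works with exactly $2f+1$ honest nodes ($n=3f+1$ with $f$ corrupted; the proof of \cref{lemma:majority-honest-committee} states this explicitly). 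By pigeonhole, $S'$ and $S''$ must therefore share an honest node. That node would hold both $m'$ and $m''$ in $\text{promises}[i][u]$, contradicting immutability. This one-line quorum-intersection step is exactly how the paper closes the proof.

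\textbf{What Step 3 is missing.} Because of the miscount, Step 3 as written never establishes the intersection. It announces an argument from the random schedule and the re-send at \cref{line:resend} but does not give it. That machinery is not needed inside this lemma. The randomness of committee selection and scheduling is absorbed into the standing assumptions, which \cref{thm:success-prob} justifies separately. Your fallback of strengthening the window assumption to ``a majority of the honest nodes'' already holds, since $f+1 > (2f+1)/2$. Its only use is the pigeonhole argument above, not a scheduling argument. Replace Step 3 with this counting step and the proof is complete. The $1-\varepsilon$ then comes solely from the failure probability of the standing assumptions.
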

\begin{proof}
    Suppose that for some round $r$, an honest node calls $\textsc{Deliver}_i(m_r^i,r)$ when cert($m_{r+\varphi}^i$) is received. From \cref{line:deliver 1_1} to \cref{line:deliver 1_2} of \cref{algo:deliver_part1}, there must exist a chain of messages spanning from $m_r^i$ to $m^i_{r+\varphi}$ where each message $m^i_v: v \in (r,r+\varphi]$ contains a valid certificate/threshold signature for $m^i_{v-1}$. In each round, when an honest node in the committee signs the message, by the definition of \textsc{syncMsg} (\cref{algo:our_algo}), either of the following holds: it has $m_r^i$ in its variable promises$[i][r]$, or it adds $m_r^i$ to promises$[i][r]$ (\cref{line: promises_1} of \cref{algo:our_algo}). To see why the first holds, suppose it has another message $(m')_r^i$, then in \textsc{syncMsg} it must have a message $(m')_{r+1}^i$ in promises$[i][r+1]$, as $m_{r+1}^i$ contains a certificate for $m_r^i$ and not $(m')_r^i$. This argument can be repeated until round $v$.

    As we have that at least $f+1$ unique honest nodes must participate in signing for at least one of these $\varphi$ many rounds, that means that at least $f+1$ unique honest nodes have $m_r^i$ in their local memory, i.e. promises$[i][r]$.

    Suppose then that a different honest node calls $\textsc{Deliver}_i((m')_r^i,r)$ when cert($(m')_{r+\varphi}^i$) is received. Then by similar argument, there must exist chain of $\varphi$ messages that spans from  $(m')_r^i$ to  $(m')_{r+\varphi}^i$ where each $(m')_v^i$ contains a threshold signature for $(m')_v^i$, and therefore there must exist at least $f+1$ honest nodes that have $(m')_r^i$ in their local memory.

    By the quorum intersection property, it is impossible that $f+1$ unique honest nodes have $m_r^i$ in their promises$[i][r]$, and another disjoint set of $f+1$ unique honest nodes that have $(m')_r^i$ in their promises$[i][r]$.


    Therefore, for any round $r$, for two different honest nodes that calls $\textsc{Deliver}_i(m_r^i,r)$ and $\textsc{Deliver}_i((m')_r^i,r)$, it holds that $(m')_r^i = m_r^i$, satisfying agreement.


\end{proof}

\begin{lemma}
    Deliver Condition 2 of APM-BRB satisfies Agreement with probability at least $1-\varepsilon$.
\end{lemma}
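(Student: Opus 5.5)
The plan is to work under the standing assumptions of this section: every sampled committee has an honest majority, and across any window of $\varphi$ rounds at least $f+1$ distinct honest nodes sign. The argument reduces Deliver Condition 2 to the same ``$f+1$ honest endorsers'' invariant used in \cref{lemma: agreement}, and then compares Condition 2 deliveries both with each other and with Condition 1 deliveries.

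\textbf{First case: \cref{algo:deliver_part2: con 1}.} Suppose an honest node delivers $m_{r-1}^i$ because $\text{cert}(m_{r-1}^i)$ appears in the prev$\{\}$ of all $n$ messages in $\mathcal{M}$. In particular it appears in the round-$r$ messages of all $n-f \geq 2f+1$ honest senders. An honest sender $P_j$ places a certificate in prev$\{\}$ only after \textsc{reqMsg}$(r-1,i)$ completes, and \textsc{reqMsg} writes the fetched chain into promises$[i][\cdot]$ (\cref{line: promises_1}). By the same immutability argument as in \cref{lemma: agreement}, every such honest $P_j$ then holds $m_{r-1}^i$ in promises$[i][r-1]$, together with the chain below it.

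So at least $2f+1 \geq f+1$ honest nodes hold $m_{r-1}^i$. The same backward-chain argument gives the analogous statement for each earlier round $u<r-1$. Each $m_u^i$ is the unique predecessor certified inside $m_{u+1}^i$, and honest nodes only store consistent chains, so these nodes also hold $m_u^i$ in promises$[i][u]$.

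\textbf{Second case: \cref{algo:deliver_part2: con 2}.} Here the certificate appears in the prev$\{\}$ of $2f+1$ delivered round-$r$ messages. At least $f+1$ of these come from honest senders, and any delivered message from an honest sender is its genuine message. By the previous paragraph, at least $f+1$ honest nodes therefore store $m_{r-1}^i$, and its ancestors, in promises$[i]$.

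\textbf{Conclusion and main obstacle.} Every delivery, whether by Condition 1 (via the proof of \cref{lemma: agreement}) or by either line of Condition 2, of a message $m'$ for $P_i$ in round $u$ is witnessed by a set of at least $f+1$ honest nodes with $m'$ in promises$[i][u]$. The entries of promises are immutable and each honest node holds one value per slot, so two distinct values would need two disjoint honest sets of size $f+1$, i.e.\ $2f+2$ honest nodes holding conflicting entries.

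The step I expect to be the main obstacle is making this intersection go through. A naive count fails: $2(f+1)$ does not exceed the $2f+1$ honest nodes. The fix is to use the stronger bound from Condition 2. In the $n$-message case there are $2f+1$ honest holders, so any other witness set of $f+1$ honest nodes must intersect it, since $(2f+1)+(f+1) > 2f+1$ honest nodes total. For the $2f+1$-delivered case, I would argue that two conflicting certificates cannot both appear in prev$\{\}$ of $2f+1$ messages. Two such sets of $2f+1$ senders intersect in at least $f+1$ nodes, hence in at least one honest sender, and that sender's prev$\{\}$ contains a single certificate for $P_i$ in round $r-1$.

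Combining these, together with \cref{lemma: agreement} to handle a Condition 1 delivery against a Condition 2 delivery, yields Agreement with probability at least $1-\varepsilon$.
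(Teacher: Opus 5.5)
Your strategy is sound: attach to every delivery of a round-$u$ message of $P_i$ a set of at least $f+1$ honest nodes holding it in promises$[i][u]$, and intersect these sets. It also covers more than the paper's proof. The paper treats only the $n$-message line: there all honest nodes hold $m_{r-1}^i$, so by the argument of \cref{lemma: agreement} no conflicting chain can ever be certified; the $2f+1$-delivered line is left implicit.

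The problem is your last paragraph. The ``obstacle'' is an arithmetic slip. With $n=3f+1$ there are exactly $2f+1$ honest nodes, and $2(f+1)=2f+2>2f+1$. So two witness sets of size $f+1$ always share an honest node, and its single immutable entry in promises$[i][u]$ forces the two delivered values to coincide. This is exactly the count the paper uses in \cref{lemma: agreement}.

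Because you discard this count, the patch you substitute leaves cases open:
\begin{itemize}
\item Sender intersection only compares two $2f+1$-delivered deliveries triggered from the same round $r$. It says nothing about two such invocations from rounds $r\neq r'$, both of which walk back to a common earlier round $u$.
\item It also says nothing about a Condition 1 delivery versus a $2f+1$-delivered one. Citing \cref{lemma: agreement} does not cover this: that lemma only compares two Condition 1 deliveries, and its proof rests on the very count you declared insufficient.
\item If your worry was that there might be more than $2f+1$ honest nodes ($n>3f+1$), the patch fails as well. Two sets of $2f+1$ senders then intersect in only $4f+2-n$ nodes, possibly none of them honest, and \cref{lemma: agreement} would break too. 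The paper implicitly fixes $n=3f+1$.
\end{itemize}

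Replace the final paragraph with the plain $(f+1)+(f+1)>2f+1$ intersection applied to your witness sets. With that change your argument goes through, and it is somewhat more thorough than the paper's.
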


\begin{proof}

    Consider when an honest node has delivered a message due to \cref{algo:deliver_part2: con 1} of \cref{algo:deliver_part2}. It must have received $n$ messages, each with cert($m_r^j$) in the prev$\{\}$ set. This implies that all honest nodes must have $m_r^j$ in their respective promises$[j][r]$. Consider any different message, $(m')_r^j$, for node $j$ in round $r$. By similar argument as~\cref{lemma: agreement}, any such message $(m')_r^j$ may never meet the threshold to obtain a valid certificate. Therefore, $(m')_r^j$ will not be delivered by any honest node.
    Additionally, observe that if all honest nodes have identical messages for a node $j$ contained in their local memory for some round $r$ (that is, in promises$[j][r]$), it must hold that all promises$[j][r]$ to promises$[j][1]$ are identical across all honest nodes as well. Therefore, since no equivocating messages may exist, delivering promises$[j][r]$ to promises$[j][1]$ is safe and no honest node can deliver a different message.

\end{proof}

\begin{lemma}
\label{lemma: liveness}
    APM-BRB satisfies Validity with probability at least $1-\varepsilon$.
\end{lemma}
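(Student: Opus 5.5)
We argue under the standing assumptions of this section: every sampled committee has an honest majority, and in any window of $\varphi$ consecutive rounds at least $f+1$ distinct honest nodes sign. Both assumptions fail with probability at most $\varepsilon$; this is deferred to \cref{thm:success-prob}. Fix an honest sender $P_i$ and a round $r$. The plan is to show two things: every honest $P_i$ makes progress through rounds $r, r+1, \dots, r+\varphi$, obtaining a certificate in each; and this chain of certificates eventually triggers Deliver Condition 1 for $m_r^i$ at every honest node.

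\textbf{Step 1: a single round completes.} When honest $P_i$ sends $m_r^i$ to $N_c=\textsc{sampleCommittee}(i,r)$, it carries a valid $\text{cert}(m_{r-1}^i)$. Since $P_i$ never equivocates, an honest committee member cannot have a conflicting $m'^i_r$ in memory. Its call to $\textsc{syncMsg}(r-1,i)$ terminates: every earlier message of $P_i$ was certified by an honest-majority committee, so some honest member of each earlier committee stores it and answers requests. The extension check at \cref{algo:our_algo_send: check extend} therefore passes. The committee holds at least $\frac{n_c}{2}+1$ honest members. Each of them eventually receives $m_r^i$, either directly or through the re-send at \cref{line:resend}, and sends its share. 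By eventual delivery, every honest node then assembles $\text{cert}(m_r^i)$.

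\textbf{Step 2: the next round starts.} For $P_i$ to proceed to round $r+1$, it needs $\textsc{reqMsg}(r,j)$ to complete for $2f+1$ nodes $P_j$. By Step 1 this happens for every honest $P_j$, and there are at least $n-f\ge 2f+1$ honest nodes. Each honest sender requests $m_r^j$ together with its certificate from $P_j$ itself, and an honest $P_j$ answers once Step 1 holds for $P_j$. We carry this out as an induction on rounds over all honest senders simultaneously. The base case $r=1$ needs no previous certificate and no $\text{prev}\{\}$ set. Hence every honest node produces certified messages in every round. The main obstacle is exactly here: ruling out a circular wait in which honest nodes block on one another's $\textsc{reqMsg}$ or $\textsc{syncMsg}$ calls. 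The fix is to order the induction by round number, so that round-$r$ progress of honest nodes depends only on round-$(r-1)$ certificates that already exist.

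\textbf{Step 3: delivery.} Consider any honest node $P_k$. It eventually receives the $\frac{n_c}{2}+1$ shares for $m_{r+\varphi}^i$, since honest committee members send shares to all nodes. It then invokes \textsc{processCertificate}. The backward walk in lines \ref{line:deliver 1_1}--\ref{line:deliver 1_2} and \ref{line: totality_1} succeeds because each $m_v^i$ is held by an honest member of the corresponding committee. This yields $\textsc{Deliver}_i(m_u^i,u)$ for all $u\le r$, and in particular for $u=r$. The delivered value equals $P_i$'s actual $m_r^i$: the certificates chain uniquely back through $P_i$'s own non-equivocating sequence, and this is consistent with \cref{lemma: agreement}. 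Finally, we conclude by a union bound with the failure events of \cref{thm:success-prob}, which gives probability at least $1-\varepsilon$.
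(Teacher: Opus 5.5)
Your proposal is correct and takes essentially the same route as the paper's proof. It inducts over rounds: honest-majority committees certify each of $P_i$'s messages, and \textsc{reqMsg} always completes via the $2f+1$ honest nodes; once cert$(m_{r+\varphi}^i)$ reaches every honest node, Deliver Condition 1 fires for $m_r^i$.
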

\begin{proof}
    Firstly, each node will be able to send a message every round. Since all honest nodes must have sent a message for round 1, each node must be able to complete $\textsc{reqMsg}()$ from by querying $2f+1$ honest nodes eventually. This holds for all rounds.

    Consider any honest node $P_i$. Recall that we have assumed that each committee of size $n_c$ has at least ($\frac{n_c}{2}+1$) of honest nodes. In round $r$, this party of honest nodes will sign $m_r^i$ and thus forms a sufficient threshold to generate cert($m_r^i$). Now, for round 2, $P_i$ has cert$(m_1^i)$ and can thus include it in $m_2^r$ as in~\cref{algo:our_algo_send} and send it to the committee, which also has sufficient honest nodes which sign the message and thus cert$(m_2^i)$ can be formed. This repeats for all rounds. Therefore for any round $r$, eventually every honest node can form cert($m_{r'}$), for all $r \le r' \le r+\varphi$ and so all honest nodes will be able to call $\textsc{deliver}_i(m,r)$ via~\cref{algo:deliver_part1}.

\end{proof}

\begin{lemma}
\label{lemma: liveness}
    APM-BRB satisfies Totality with probability at least $1-\varepsilon$.
\end{lemma}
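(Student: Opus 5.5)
We prove Totality separately for each of the two delivery conditions, under the standing assumptions of this section: every sampled committee has an honest majority, and within any window of $\varphi$ rounds at least $f+1$ distinct honest nodes sign.

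\textbf{Deliver Condition 1.} Suppose an honest node $P_j$ calls $\textsc{Deliver}_i(m_r^i,r)$ through \textsc{processCertificate}. Then $P_j$ holds some $\text{cert}(m_{r'}^i)$ with $r' \ge r+\varphi$. On entering \textsc{processCertificate} it also records $(\text{cert}(m_{r'}^i),r')$ in $\text{triggers}[i]$ (\cref{line: trigger_1}).

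\begin{enumerate}
\item \emph{$P_j$ advertises the certificate.} $P_j$ is honest, so by Validity (\cref{lemma: liveness}) it keeps broadcasting messages $m_{r''}^j$ in every later round $r''$. Each such message carries $\text{triggers}[]$, so every later message of $P_j$ carries a certificate for $P_i$ of round at least $r'$.
\item \emph{Every honest node learns it.} By Validity, every honest node $P_k$ eventually delivers $m_{r''}^j$, at least for rounds $r''$ old enough to be covered by Deliver Condition 1. On delivery, \cref{line:trigger_2} invokes \textsc{processCertificate} on $P_i$'s certificate.
\item \emph{The backward walk succeeds.} $P_k$ must reconstruct the chain backward from the certificate to round $1$ (\cref{line:deliver 1_1,line: totality_1}). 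Each $m_v^i$ in the chain is certified by $\frac{n_c}{2}+1$ signers of an honest-majority committee, so at least one honest node signed it. That node stored $m_v^i$ in promises and resent it to the committee (\cref{line:resend}), so each ``obtain $m_v^i$ from $N_c$'' eventually returns. The embedded $\text{cert}(m_{v-1}^i)$ lets the walk continue.
\item \emph{Same message.} The walk makes $P_k$ deliver a message for round $r$ of $P_i$. By Agreement for Condition 1 (\cref{lemma: agreement}) this is $m_r^i$.
\end{enumerate}

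\textbf{Deliver Condition 2.} Suppose $P_j$ delivered $m_{r-1}^i$ at \cref{algo:deliver_part2: con 1}. Then all $n$ messages of round $r$ reference $\text{cert}(m_{r-1}^i)$ in prev$\{\}$. Each honest node eventually delivers the round-$r$ messages of all $2f+1$ or more honest senders by Validity, and all of those reference that certificate. Their common references therefore include $\text{cert}(m_{r-1}^i)$, so \cref{line: totality_2} fires at every honest node and walks back to deliver $m_{r-1}^i$ and its predecessors. The walk succeeds by the same availability argument as in step 3 above.

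\textbf{Probability.} The failure events are exactly the violations of the two standing assumptions, which are bounded in \cref{thm:success-prob}. So Totality holds with probability at least $1-\varepsilon$.

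The main obstacle is step 2 of Condition 1. Delivering $P_j$'s later messages itself relies on Validity, which needs further certificate chains for $P_j$, so the argument could look circular. I would break this with an induction on rounds: Validity for honest senders is unconditional (\cref{lemma: liveness}) and does not depend on trigger propagation. Hence there is no circular dependency, only eventual progress.
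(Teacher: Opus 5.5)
Your argument holds at the paper's level of rigour and uses the same mechanism as the paper. The deliverer's certificate is recorded in triggers[], Validity carries one of its later messages to every honest node, and \cref{line:trigger_2} re-runs \textsc{processCertificate} there. Each backward fetch then succeeds because some honest member of an honest-majority committee signed, and hence received, the chain message. What differs is the decomposition. The paper splits by sender: for an honest $P_i$ it argues directly from Validity, and it uses triggers only for a Byzantine $P_i$. You split by delivery condition and run the trigger argument for every sender. Your split exposes something the paper glosses over. triggers[] is written only inside \textsc{processCertificate} (\cref{line: trigger_1}), so the paper's Byzantine-sender case tacitly assumes a Condition~1 delivery. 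Its written proof therefore does not cover a Byzantine sender's message delivered via Condition~2, and your separate Condition~2 paragraph supplies the argument the paper only sketches when describing the delivery mechanism. Your step~4 invokes Agreement so that a walk started from a later, higher-round trigger still yields the same $m_r^i$; this too makes explicit what the paper leaves implicit.

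Some points still need tightening. Condition~2 has two delivery lines, and you treat only \cref{algo:deliver_part2: con 1}. For an honest sender, a delivery at \cref{algo:deliver_part2: con 2} is covered at once by Validity and Agreement; this is what the paper's honest-sender case gives for free. For a Byzantine sender it rests on $2f+1$ earlier round-$r$ deliveries that all reference $\text{cert}(m_{r-1}^i)$, so you need a short induction on the order of delivery events. By the hypothesis, every honest node eventually delivers those same $2f+1$ messages and then performs the same walk. This step and your \cref{algo:deliver_part2: con 1} argument both need \cref{line: totality_2} to apply to \emph{any} $2f+1$ delivered round-$r$ messages, in your case the honest senders' ones. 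As the paper itself notes, a Byzantine sender can obtain certificates on two different round-$r$ messages. So the version $P_k$ delivers need not be the one $P_j$ placed in $\mathcal{M}$, and a rule that fires only on the first $2f+1$ deliveries could miss the certificate. The paper's informal argument relies on the same reading. Finally, step~3 has a small slip: the honest signer of $m_v^i$ ran \textsc{syncMsg}$(v-1,i)$, which fills promises only up to round $v-1$. What you need, and what the paper uses, is only that this signer received $m_v^i$ and re-sent it to the round-$v$ committee.
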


\begin{proof}

    Consider the case that the sender $i$ is honest.
    Now, suppose some honest node calls $\textsc{deliver}_i(m,r)$. This implies that this honest node has the message $m$ and the certificate cert$(m_{r+\varphi}^i)$. Since messages are eventually delivered, all honest nodes will eventually also obtain all the certificates for messages between rounds $r$ and $r + \varphi$. By an argument identical to~\cref{lemma: liveness}, all honest nodes will eventually call $\textsc{deliver}_i(m,r)$, thereby satisfying totality.

    Consider the case where the sender $P_i$ is Byzantine. Suppose some honest node calls $\textsc{deliver}_i(m,r)$, it will also perform \cref{line: trigger_1} of \cref{algo:deliver_part1} where it updates its own triggers[] with cert($m_{r+\varphi}^i$); that caused it to perform $\textsc{deliver}_i(m, r)$. By validity, the next message it sends (which will include cert($m_{r+\varphi}^i$) as part of triggers[]) will eventually be delivered by all honest nodes. Therefore, all honest nodes will eventually perform \cref{line:trigger_2} of \cref{algo:deliver_part1}. Further, since at least one honest node must have received and signed the message sent by $P_i$ in each round between $r$ and $r + \varphi$, honest nodes can successfully retrieve the messages and eventually perform $\textsc{deliver}_i(m,r)$. Therefore totality is satisfied.

\end{proof}



\begin{lemma}
    APM-BRB satisfies Completion Sequentiality with probability at least $1-\varepsilon$.
\end{lemma}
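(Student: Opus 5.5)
The plan is a case split on which delivery condition produced $\textsc{Deliver}_i(\cdot,r)$, showing in each case that the honest node holds (or can fetch) a certificate chain reaching back through round $r-1$. We work under the same standing assumptions as the previous lemmas: every sampled committee has an honest majority, and $f+1$ honest signers appear in every window of $\varphi$ rounds. The key structural fact is that every message $m_v^i$ with $v>1$ that can obtain a certificate contains $\text{cert}(m_{v-1}^i)$. This holds because an honest committee member signs only after the check in \cref{algo:our_algo_send: check extend} of \cref{algo:our_algo_send}, and at least one honest member must sign for $\frac{n_c}{2}+1$ shares to exist.

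\textbf{Condition 1.} Suppose the honest node delivers via \textsc{processCertificate} on $\text{cert}(m_{r'}^i)$ with $r'\ge\varphi$. The loop starting at \cref{line: totality_1} of \cref{algo:deliver_part1} walks $u=r'-\varphi$ down to $1$ and delivers $m_u^i$ at each step, following the embedded certificates. So if round $r$ was delivered by this call, round $r-1$ is delivered in the very next iteration.

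\textbf{Retrieval along the chain.} We must also argue that each ``obtain $m_u^i$ from $N_c$'' step terminates. The certificate for $m_u^i$ is a $(\frac{n_c}{2}+1,N_c)$ threshold signature, so some honest member of $\textsc{sampleCommittee}(i,u)$ signed $m_u^i$. That member stored $m_u^i$ in promises (\cref{line: promises_1} of \cref{algo:our_algo}) and will serve it, and eventual message delivery then gives termination. For the same reason, the certificate $\text{cert}(m_{u-1}^i)$ embedded in $m_u^i$ is valid. This makes the induction down the chain well-founded.

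\textbf{Condition 2.} Both loops in \cref{algo:deliver_part2}, at \cref{algo:deliver_part2: con 1} and \cref{algo:deliver_part2: con 2}, iterate $u=r-1$ down to $1$. The first loop uses messages already in promises, and the second fetches messages via certificates exactly as in Condition 1. Hence any delivery at round $u$ is followed by an attempt at $u-1$, and the same retrieval argument shows that attempt completes.

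\textbf{Main obstacle.} The hardest point is ruling out a ``gap'': an honest node that delivers round $r$ via one chain but for which round $r-1$ is unreachable, for example because a Byzantine sender built a chain whose round-$(r-1)$ link has no honest holder. I would handle this using the honest-signer argument above, which shows such a chain cannot carry a valid certificate. I would also use \cref{lemma: agreement} to note that whatever is delivered at $r-1$ along this chain is consistent with any other delivery of round $r-1$, so ``being able to call $\textsc{Deliver}_i(\cdot,r-1)$'' is meaningful. All of this holds with probability at least $1-\varepsilon$, inherited from the standing assumptions.
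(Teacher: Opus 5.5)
Your proposal is correct and follows essentially the paper's argument: every delivery site walks the chain of embedded threshold signatures down to round $1$. Each valid certificate forces an honest signer in the publicly computable committee, from whom the predecessor message can eventually be fetched and verified against the certificate. Your explicit case split over the delivery loops of \cref{algo:deliver_part1,algo:deliver_part2}, and your observation that the embedded $\text{cert}(m_{u-1}^i)$ is valid because an honest signer performed the \textsc{syncMsg} check, are a more careful rendering of the same proof (one small nit: \textsc{syncMsg} writes promises only up to round $u-1$, so the round-$u$ signer holds $m_u^i$ as a received and re-sent message rather than in $\text{promises}[i][u]$, which does not affect the argument).
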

\begin{proof}
When $P_j$ calls $\textsc{deliver}_i(m, r)$ for a message sent by $P_i$ where $i \neq j$, it recursively locates all messages previously sent by $P_i$ via the threshold signature embedded in $m$ (\cref{line: totality_1} of \cref{algo:deliver_part1} and \cref{line: totality_2} of \cref{algo:deliver_part2}). Observe that for the certificate of the message for any round to exist, it must have been signed by at least one honest node in $N_c$ (for that round).

Since $\textsc{sampleCommittee}()$ is a public function, $P_j$ can identify and query the nodes in $N_c$, and is guaranteed to be able to eventually obtain the message $m'$ that $P_i$ sent in round $r-1$ from an honest node that participated in signing it. The authenticity of $m'$ can then be verified against the threshold signature contained in $m$.

Consequently, $P_j$ is able to invoke $\textsc{deliver}_i(m', r-1)$, thereby satisfying completion sequentiality.

\end{proof}

\section{Complexity Analysis of APM-BRB}
\label{sec: complexity analysis}

In this section, we show that our choices of $n_c$ and $\varphi$ are sufficient to give the probability of failure of our algorithm to be at most $\varepsilon$. Namely, we prove the following.

\begin{theorem}
\label{thm:success-prob}
    Let $\varphi = 2 n \log \frac{1}{\varepsilon}$ and $n_c = 18\log \frac{\phi}{\varepsilon}$. Then with probability at least $1 - \varepsilon$, the following holds for every sender $P_i$. First, every sampled committee (of the $\phi$ many for each sender) has a majority of honest nodes. Second, among these $\phi$ sampled committees, at least $f+1$ unique honest nodes must sign at least one message sent by $P_i$ among all of the rounds.
\end{theorem}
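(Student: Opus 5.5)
The plan is to prove the two claims separately and combine them with a union bound, giving each failure event probability at most $\varepsilon/2$. Here $\phi$ is read as $\varphi$. Committees are modeled as $n_c$ independent uniform draws from $\Pi$, which is how \textsc{sampleCommittee} behaves with $\mathcal{H}$ treated as a random oracle. The honest fraction is $h=(n-f)/n\ge 2/3$. Since \cref{line: our_algo: repeat comm} makes committees repeat every $\varphi$ rounds, each sender has only $\varphi$ distinct committees. So at most $n\varphi$ committees matter in total.

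\textbf{Honest majority.} Fix one committee and let $X$ be the number of honest draws, so $\mathbb{E}[X]\ge \tfrac{2}{3}n_c$. The committee fails only if $X\le n_c/2$, a relative deviation of $\delta=1/4$ below the mean. The multiplicative Chernoff bound gives
\[
\Pr[X\le n_c/2]\le \exp\!\left(-\tfrac{\delta^2}{2}\cdot\tfrac{2}{3}n_c\right)=\exp(-n_c/48).
\]
The constant $18$ in $n_c=18\log\frac{\varphi}{\varepsilon}$ does not survive this directly. I would use the sharper KL/Hoeffding form $\exp(-2(1/6)^2 n_c)=\exp(-n_c/18)$, which does match the $18$. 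This yields a per-committee failure probability of at most $\varepsilon/\varphi$. A union bound over the $\varphi$ committees of a sender gives $\varepsilon$. To cover all $n$ senders as well, I would absorb the factor $n$ by enlarging the constant slightly, or note that $\varphi\ge n$ lets $\log(\varphi/\varepsilon)$ cover it after adjusting constants. I expect this constant bookkeeping to need some slack, and I would state it up to constants if it does not close exactly.

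\textbf{$f+1$ distinct honest signers.} Here I use the random asynchronous model. In each round an honest committee member's share is among the first $\frac{n_c}{2}+1$ used to form the certificate with a probability that is not adversarially controlled. Because of the resend at \cref{line:resend}, I model the signer set of each round as containing at least one honest node distributed uniformly over the honest population, independently across rounds. Then I have a coupon-collector problem: $\varphi$ rounds, each revealing at least one uniformly random honest node out of $n-f\le n$, and I need at least $f+1$ distinct ones.

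Fix any set $S$ of $f$ honest nodes. The probability that all $\varphi$ revealed nodes lie in $S$ is at most $\left(\frac{f}{n-f}\right)^{\varphi}$. It is cleaner to bound the missing count directly instead. The probability that the distinct count stays at most $f$ is at most
\[
\sum_{j\le f}\Pr[\text{round }j\text{ gives no new node}]\cdot(\ldots),
\]
and the simplest route is the standard coupon-collector tail. Collecting $f+1\le n-f$ coupons out of $n-f$, each stage has success probability at least $1/n$. The waiting time is then dominated by a sum of $f+1$ geometric variables with parameter $1/n$. Its expectation is at most $n(f+1)$, and a Chernoff bound for geometric sums shows that $\varphi=2n\log\frac1\varepsilon$ rounds suffice except with probability about $\varepsilon$.

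This step is the main obstacle. Stated literally, $f+1$ coupons at rate $1/n$ need about $nf$ rounds, which is larger than $2n\log\frac1\varepsilon$ when $f$ is large. So I would instead use the better stage probabilities $(n-f-j)/n\ge 1/3$ for $j\le f$. These make each stage geometric with constant mean, so about $3(f+1)$ rounds are needed in expectation. The tail is then $e^{-\Omega(\varphi)}$, which is at most $\varepsilon$ since $\varphi\ge 2n\log\frac1\varepsilon\ge 3(f+1)+O(\log\frac1\varepsilon)$.

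Finally, a union bound over the $n$ senders and the two events gives total failure probability $O(\varepsilon)$. Rescaling $\varepsilon$ by a constant then gives the theorem.
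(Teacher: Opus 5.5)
Your proposal is essentially the paper's proof: Hoeffding with deviation $n_c/6$ giving $\exp(-n_c/18)$ per committee and a union bound over the $\varphi$ repeating committees, then a coupon-collector bound on a sum of $f+1$ geometric variables with constant success probability, then a union bound over the two events. (The paper uses $p_a=\frac{2f+2-a}{2f+1}\ge\frac12$ and Janson's tail bound with $\lambda=\log\frac{1}{\varepsilon}$; it is this multiplicative margin of $\varphi$ over the mean, not the additive inequality $\varphi\ge 3(f+1)+O(\log\frac1\varepsilon)$ you cite, that gives your $e^{-\Omega(\varphi)}$ tail.) Your extra union bound over all $n$ senders is what forces you to enlarge the constant $18$; the paper skips that step, since its proof fixes one sender and its lemma actually needs $n_c\ge 18\log\frac{2\varphi}{\varepsilon}$, so this reflects looseness in the stated parameters rather than a flaw in your argument.
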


Note that all logarithms in this section are assumed to be the natural logarithm. 
We first show that our choice of $n_c$ is sufficient, in particular, we ensure that the probability that any committee does not have a majority of honest nodes is at most $\frac{\varepsilon}{2 \phi}$.





\begin{lemma}
\label{lemma:majority-honest-committee}
    Consider an $\varepsilon > 0$. Let $N$ be a set of nodes sampled uniformly at random without replacement of size $n_c$, where $n_c \geq 18 \log (\frac{2 \phi}{\varepsilon})$. The probability that there are more than $\frac{n_c}{2}$ byzantine nodes sampled in $N$ is at most $\frac{\varepsilon}{2\phi}$.
\end{lemma}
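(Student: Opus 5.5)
Let $X$ denote the number of Byzantine nodes in $N$. The plan is a standard Chernoff/Hoeffding tail bound. Because the nodes are sampled uniformly \emph{without} replacement from $\Pi$, $X$ is hypergeometric: population $n$, $f$ ``successes'', and $n_c$ draws. Its mean is $\mu = \mathbb{E}[X] = n_c \frac{f}{n} < \frac{n_c}{3}$, since $f < \frac{n}{3}$.

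The first step removes the dependence introduced by sampling without replacement. By Hoeffding's classical result, every tail bound obtained from the moment generating function for sums of independent Bernoulli$(f/n)$ variables also applies to the hypergeometric variable. Equivalently, $\mathbb{E}[e^{\lambda X}]$ is at most the corresponding binomial moment generating function. We may therefore treat $X$ as a sum of $n_c$ independent indicators, each with success probability $p = f/n < 1/3$.

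The second step bounds the event $X > \frac{n_c}{2}$. Since $\frac{n_c}{2} - \mu \geq \frac{n_c}{2} - \frac{n_c}{3} = \frac{n_c}{6}$, the additive Hoeffding inequality gives
\[
\Pr\left[X > \tfrac{n_c}{2}\right] \le \Pr\left[X - \mu \ge \tfrac{n_c}{6}\right] \le \exp\left(-2 n_c \cdot \tfrac{1}{36}\right) = \exp\left(-\tfrac{n_c}{18}\right).
\]
If one prefers the multiplicative Chernoff bound, set $\delta$ by $(1+\delta)\mu = n_c/2$, so that $\delta \ge 1/2$. The resulting bound $\exp(-\delta^2\mu/(2+\delta))$ also works, but the additive form is cleaner and it produces exactly the constant $18$.

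The last step substitutes $n_c \ge 18\log\frac{2\phi}{\varepsilon}$. This gives $\exp(-n_c/18) \le \frac{\varepsilon}{2\phi}$, which is the claimed bound.

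The main obstacle I expect is justifying the passage from without-replacement sampling to the independent case. I would cite Hoeffding (1963, Theorem 4) rather than reprove it.

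A second point needs a remark. The \textsc{sampleCommittee} routine in \cref{algo:our_algo} actually draws indices via hashing, with possible repeats. I would note that the lemma is stated for uniform sampling without replacement, and that the with-replacement case is directly binomial, so the same bound holds in either model.

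I would also handle integrality: the threshold is $\frac{n_c}{2}$ and the gap of $\frac{n_c}{6}$ is not rounded, so no ceiling issues affect the constant.
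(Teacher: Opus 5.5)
Your proof is correct and is essentially the paper's argument: you apply Hoeffding's inequality, which remains valid for sampling without replacement, with deviation $\frac{n_c}{6}$ from a mean of at most $\frac{n_c}{3}$ Byzantine nodes (equivalently at least $\frac{2n_c}{3}$ honest nodes), giving $\exp(-\frac{n_c}{18}) \le \frac{\varepsilon}{2\phi}$. The differences are cosmetic: you count Byzantine rather than honest nodes and use the one-sided tail, which is actually the accurate form, since the paper writes the two-sided event $|X-\mathbb{E}(X)| \ge \frac{n_c}{6}$ but quotes the one-sided bound without the factor $2$.
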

\begin{proof}
Consider a uniform sample of $n_c$ nodes without replacement. Let $X_i$ be an indicator random variable that takes on $1$ if the $i$th sampled node is honest, and 0 otherwise, and let $X = \sum_{i = 1}^{n_c} X_i$.

As the number of honest nodes is $2f+1$ out of $3f$ total nodes, it holds that:

    
\[ \Pr[X_i = 1] \ge 2/3 .\]

Therefore by linearity of expectation,

\begin{equation*}
    \mathbb{E}(X) \ge \frac{2 n_c}{3}.
\end{equation*}

By applying the standard Hoeffding's bound\cite{Hoeffding1963ProbabilityIF}, which holds for sampling without replacement, we obtain:

\begin{align*}
    \Pr \left[ |X - \mathbb{E}(X)| \ge \frac{n_c}{6} \right] & \le \exp \left(-\frac{2(n_c/6)^2}{n_c} \right) \\
    \le  \exp \left(-\frac{n_c}{18} \right) .
\end{align*}

Let $B$ be the event that at $X$ is less than $\frac{n_c}{2}$, that is, at least $\frac{n_c}{2}$ byzantine nodes are sampled in the set $N$. Then it is clear that $\Pr(B) \le \exp (-\frac{n_c}{18})$. As we have that $n_c \ge 18 \log \frac{2\phi}{\varepsilon}$, we have that:

\begin{align*}
    \Pr(B) & \le \exp (-\frac{n_c}{18}) \\
    & \le \exp (- \log \frac{2 \phi}{\varepsilon}) \\
    & \le \frac{\varepsilon}{2 \phi} .
\end{align*}

Therefore, with probability at most $\frac{\varepsilon}{2\phi}$, we have that the sampled set consists of less than $\frac{n_c}{2}$ honest nodes.
\end{proof}

Now, suppose that every round indeed has a simple majority of honest nodes. We show that if we take a sufficient number of committees, we will have at least $f+1$ unique honest nodes whose signatures of the message forms the threshold certificate in at least one of the rounds.

Before we prove this, we will need a concentration bound on the sum of independent geometric random variables.
\begin{theorem}[\cite{janson2018tail}]
    \label{thm:tail-exponential}
    Consider random variables $X_1, ..., X_n$ where $X_i$ is drawn from a geometric distribution with probability parameter $p_i$ and let $X = \sum_{i = 1}^n X_i$. Then for any $\lambda > 1$, it holds that:
    \[ \Pr(X \ge \lambda \mu) \le \exp(-p^*\mu(\lambda - 1 - \log \lambda) )\] 
    
    where $\mu := \mathbb{E}(X)$ and $p^* = \min_{1 \le i \le n} p_i$. 
\end{theorem}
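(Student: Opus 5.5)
The statement immediately above is \cref{thm:tail-exponential}. I would prove it by the Chernoff method: bound the moment generating function of each geometric variable by that of an exponential variable with the same mean, add up the log-MGFs using independence, and then optimise the exponent. Throughout, I take $X_i$ to be supported on $\{1,2,\dots\}$ with $\mathbb{E}(X_i)=1/p_i$, so that $\mu=\sum_i 1/p_i$.

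\textbf{Step 1 (single-variable MGF bound).} For $0\le t<p_i$ I claim
\[
\mathbb{E}\bigl(e^{tX_i}\bigr)=\frac{p_ie^t}{1-(1-p_i)e^t}\le \frac{1}{1-t/p_i}.
\]
The condition $t<p_i$ guarantees $t<-\log(1-p_i)$, so the series defining the MGF converges. Cross-multiplying, the inequality reduces to $p_ie^t-te^t\le 1-e^t+p_ie^t$, that is, $e^t(1-t)\le 1$. This last inequality holds for all real $t$, because $1-t\le e^{-t}$.

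\textbf{Step 2 (uniformising the parameters).} Write $t=sp^*$ with $s\in[0,1)$, so that $t/p_i=c_is$ where $c_i=p^*/p_i\in[0,1]$. The function $g(x)=-\log(1-x)$ is convex on $[0,1)$ and satisfies $g(0)=0$. Hence $g(c_is)\le c_i\,g(s)$. By independence and Step 1,
\[
\log \mathbb{E}\bigl(e^{tX}\bigr)\le \sum_i g(c_is)\le g(s)\sum_i \frac{p^*}{p_i}=p^*\mu\,g(s).
\]

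\textbf{Step 3 (Markov and optimisation).} Markov's inequality gives
\[
\Pr(X\ge\lambda\mu)\le e^{-t\lambda\mu}\,\mathbb{E}\bigl(e^{tX}\bigr)\le \exp\bigl(-p^*\mu\,(s\lambda+\log(1-s))\bigr).
\]
Choose $s=1-1/\lambda$, which lies in $(0,1)$ because $\lambda>1$. Then $s\lambda+\log(1-s)=\lambda-1-\log\lambda$, and this yields exactly the stated bound.

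The only non-mechanical step is Step 2: collapsing the heterogeneous parameters $p_i$ into the single worst case $p^*$ without losing the factor $\mu$. The convexity inequality $g(cs)\le c\,g(s)$ handles this cleanly. If that inequality failed to hold, I would instead bound each $g(t/p_i)$ by a power series in $t/p_i$ and compare term by term.
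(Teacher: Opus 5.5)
Your proof is correct. The paper does not prove this bound but only cites Janson, and your argument is essentially the standard Chernoff-method proof behind that result: dominate each geometric MGF by the exponential MGF $p_i/(p_i-t)$, reduce to the worst parameter via $-\log(1-t/p_i)\le (p^*/p_i)\bigl(-\log(1-t/p^*)\bigr)$, and optimise at $t=p^*(1-1/\lambda)$. For that middle inequality, your convexity argument and the termwise power-series comparison you list as a fallback are equivalent justifications.

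You should say explicitly that you use independence of the $X_i$. The printed statement omits it; only the surrounding text says ``independent''. It is genuinely necessary: for $X_1=\dots=X_n$ the left-hand side does not depend on $n$, while the right-hand side tends to $0$.
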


Now, we show that if $\varphi = O(n)$, then the number of committees is indeed sufficient for our protocol to succeed with high probability.
\begin{lemma}
\label{lemma:coupon-collector}
    Consider an $\varepsilon > 0$ and let $\varphi = 2 n \log \frac{1}{\varepsilon}$. Fix an honest node $i$ and consider $\varphi$ many rounds of the protocol, where each committee has a majority of honest nodes. Then with probability at least $1 - \varepsilon^2$, there are at least $f+1$ unique honest nodes which sign at least one message sent from $i$ throughout these $\varphi$ rounds.
\end{lemma}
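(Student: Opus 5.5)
This is a coupon-collector argument: collect $f+1$ distinct honest signers, bounding the collection time by a sum of geometric random variables and applying \cref{thm:tail-exponential}.

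\textbf{Model of one round.} In each round the committee has a majority of honest nodes, and by the random asynchronous model (together with the re-send on \cref{line:resend}) the set of $\frac{n_c}{2}+1$ signers forming $\text{cert}(m_r^i)$ contains at least one honest node. That node is distributed uniformly over the honest nodes, by the symmetry of the uniform committee sampling and the random schedule. Committees in different rounds are sampled independently. I will therefore model each round as producing at least one honest signer that is uniform over the $h \ge 2f+1$ honest nodes, independently across rounds. Keeping only this one signer per round can only hurt us, so the real process dominates this model.

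\textbf{Reduction to geometric variables.} Suppose $j$ distinct honest signers have been collected so far. A round then yields a new one with probability at least $p_j = \frac{h-j}{h}$. Let $X_j$ be geometric with parameter $p_j$, for $j = 0,\dots,f$. The number of rounds needed to collect $f+1$ distinct honest signers is stochastically dominated by $X = \sum_{j=0}^{f} X_j$. Since $h \ge 2f+1$, every $p_j \ge \frac{h-f}{h} \ge \frac12$, so $p^* \ge \frac12$. Also
\[
\mu = \mathbb{E}(X) = \sum_{j=0}^{f} \frac{h}{h-j} \le 2(f+1) \le n,
\]
and $\mu \ge f+1$.

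\textbf{Applying the tail bound.} We need $\Pr(X > \varphi) \le \varepsilon^2$. Set $\lambda = \varphi/\mu \ge 2\log\frac1\varepsilon$; we may assume $\varepsilon$ is small enough that $\lambda > 1$. \cref{thm:tail-exponential} gives
\[
\Pr(X \ge \varphi) \le \exp\bigl(-p^*\mu(\lambda - 1 - \log\lambda)\bigr) \le \exp\bigl(-\tfrac12(\varphi - \mu - \mu\log\lambda)\bigr).
\]
We have $\varphi - \mu \ge 2n\log\frac1\varepsilon - n$. The remaining term is $\mu\log\lambda = \mu\log(\varphi/\mu)$. The function $\mu \mapsto \mu\log(\varphi/\mu)$ is increasing for $\mu \le \varphi/e$, so this term is at most $n\log(2\log\frac1\varepsilon)$, which is $o(n\log\frac1\varepsilon)$. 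The exponent is therefore roughly $n\log\frac1\varepsilon - O(n\log\log\frac1\varepsilon)$. Since $n \ge 2$, this gives a bound of at most $\varepsilon^2$ once $\varepsilon$ is below a constant threshold.

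\textbf{Main obstacle.} I expect the hardest step to be this final calculation. The constant $2$ in $\varphi$ leaves little slack, so I would first try to use $\mu \le 2(f+1) \le \frac{2n+2}{3}$ rather than $\mu \le n$, which gives a much larger margin. The modeling step, namely justifying that the honest signer of each round is uniform and independent across rounds, also needs care. I would argue it directly from the random schedule assumption, noting that signatures arriving from more than $\frac{n_c}{2}+1$ nodes only increase the number of honest signers.
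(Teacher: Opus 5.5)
Your proof is correct and follows the paper's route. The paper also writes the collection time as a sum of independent geometric variables $Y_a$ with success probabilities $\frac{2f+2-a}{2f+1}\ge\frac12$, which implicitly assumes your one-uniform-honest-signer-per-round model, and then applies \cref{thm:tail-exponential}.

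The only difference is the choice of $\lambda$. The paper fixes $\lambda=\log\frac1\varepsilon$, so it needs both $\mu\le 2n$ and a lower bound $\mu\ge n/2$; that lower bound does not follow from $\mathbb{E}(Y_a)\ge 1$, which gives only $\mu\ge f+1\approx n/3$. Your choice $\lambda=\varphi/\mu$ turns the exponent into $p^*\bigl(\varphi-\mu-\mu\log\frac{\varphi}{\mu}\bigr)$, which is decreasing in $\mu$, so only the upper bound $\mu\le n$ is needed. One small point: your final step needs $n\ge 3$ rather than $n\ge 2$, which is harmless because $f=0$ whenever $n\le 3$.
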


\begin{proof}
    Fix some honest node $i$. For $1 \le a \le f+1$, let $Y_a$ be a random variable that denotes the number of rounds needed before the $a$th new honest node signs a message sent by $i$ within these $\varphi$ rounds. Observe that $Y_a$ is a geometric random variable where $\Pr(Y_a = k) = p_a(1-p_a)^{k-1}$ for $p_a = \frac{2f+2-a}{2f+1}$. Let $Y = \sum_{a \le f+1} Y_a$. We have that $\frac{n}{2} \le \mathbb{E}(Y) \le 2n$, where the lower bound follows from the fact that clearly $\mathbb{E}(Y_a) \ge 1$ for all $a$. The upper bound comes from the fact that $\min p_a = \frac{f+1}{2f+1} \ge 0.5$, and so $\mathbb{E}(Y_i) \le 2$ for all $a$.

    By Theorem~\ref{thm:tail-exponential}, we have that:
    \[\Pr(Y \ge \lambda \mu) \le \exp(-0.5 \mu(\lambda-1-\log \lambda)) \] 
    where $\mu := \mathbb{E}(Y)$.

    Now, let $\lambda = \log \frac{1}{\varepsilon}$, and along with the fact that $\mu \le 2n$ gives:

    \begin{align*}
        \Pr(Y \ge 2 n \log \frac{1}{\varepsilon}) & \le \exp \left(-0.5 \mu (\log \frac{1}{\varepsilon} - 1 - \log \log \frac{1}{\varepsilon}) \right) \\
        & \le \exp \left(-0.5 \mu (0.5 \log \frac{1}{\varepsilon}) \right) && \text{(For sufficiently small $\epsilon < 0.001$)} \\
        & \le \exp \left(-1/8 \cdot n \log \frac{1}{\varepsilon} \right) && \text{(As $\mu > 0.5n$)} \\
        & \le \varepsilon^{n/8} \\
        & \le \varepsilon^{2} && \text{(For $n \ge 16$).}\\
    \end{align*}

    Since $Y$ is a random variable that denotes the number of rounds before $f+1$ unique honest nodes sign at least one of the messages in the $\varphi$ consecutive rounds, this completes the proof of the lemma.
\end{proof}

Now, we can complete the proof of~\cref{thm:success-prob}.

\begin{proof}[Proof of~\cref{thm:success-prob}]
    Fix any honest sender node $i$. 
    
    Observe that by~\cref{lemma:majority-honest-committee} and the selection of $n_c$, each committee has probability at most $\frac{\epsilon}{2\phi}$ that it does not have a simple majority of honest nodes. Now, our protocol will sample $\phi$ distinct committees. By union bound, every committee in the execution of our protocol has a majority of honest nodes with probability at least $1 - \frac{\varepsilon}{2}$.

    Observe that for any round $r$, the set of committees from round $r$ to $r + \varphi$ is identical, as our protocol uses $r \bmod \varphi$. By~\cref{lemma:coupon-collector} and the choice of $\varphi$, given that each has a majority of honest nodes, then it must be that at least $f+1$ unique honest nodes signed at least one of the messages from the sender $i$ with probability at least $1 - \varepsilon^2$. 
    
    Taking a union bound over the failure probability of the two above events, we have that for a fixed sender $i$, our protocol will succeed with probability at least $1 - \varepsilon$.
\end{proof}

\subsection{Communication Cost Breakdown for APM-BRB}
\label{Comm cost breakdown}

Let us now compute the communication cost of APM-BRB.
A sender performs $n$ instances of $\textsc{reqMsg}()$ before constructing its message to send: where the sender will query $n$ nodes to obtain at least $2f+1$ messages from the previous round. This incurs a cost of $n+n(|m|+nk+k)$.

The sender incurs a cost of $n_c (|m|+nk)$ when sending its message along with $nk$ bits of metadata (triggers$[]$, \text{prev}$\{\}$) to a committee (lines 1 to 17 of \cref{algo:our_algo_send}). 

Each node in $N_c$ then runs $\textsc{syncMsg}()$ incurring a cost of $n_c^2 + n_c^2(|m|+nk)$. Each node then forwards the message again to all other nodes in $N_c$ before broadcasting its signature share to all nodes, incurring a cost of $n_c^2(|m|+nk)+nn_c k$.



The total cost of each round is therefore $n+n(|m|+nk+k) +n_c (|m|+nk) + n_c^2 + n_c^2(|m|+nk) + n_c^2(|m|+nk)+nn_c k$, which can be bounded by $O(n|m|+n^2k+n_c^2|m|+nn_c^2k)$.

Recall that from~\cref{lemma:majority-honest-committee}, we have set $n_c = O(\log n)$, therefore the cost of each round can be bounded by $O(n|m|+n^2k)$
For brevity, let $\Delta$ denote the cost of one round.


For a particular message to complete reliable broadcast, a chain of $\varphi$ certificates must be constructed. Therefore, the total cost to reliably broadcast the first message using APM-BRB is $\varphi \cdot \Delta$. Each subsequent reliable broadcast incurs only $\Delta$ per round.

By \cref{def: amortized}, the cost to broadcast $r$ messages is:
\[
 C(r,n,f) = \varphi \cdot \Delta + r \cdot \Delta
\]

Our amortized cost can be bounded by:
\[
\frac{C(r,n,f)}{r} = \frac{\varphi}{r} \cdot \Delta +\Delta
\]

For a fixed $n$, as $r\rightarrow \infty$ and $\varphi = O(n)$, it holds that $\frac{\varphi}{r} = o(1)$, and so $\frac{C(r,n,f)}{r} = O(\Delta) = O(n|m|+n^2k)$.

And so we obtain:
\[
\frac{C(r,n,f)}{r} = O(n|m|+n^2k)
\]

When the message size dominates the metadata, that is $|m| = \Omega(nk)$, the amortized communication complexity simplifies to $O(n|m|)$.

We also make the following remark on the latency. In the common case per \cref{algo:deliver_part1}, delivery takes $\varphi=O(n)$ rounds. Our optimistic delivery mechanism described in \cref{algo:deliver_part2} achieves $O(1)$ round latency in favorable conditions.




\section{Impossibility of Optimally Resilient Committee-based BRB in Asynchrony}
\label{sec: Impossibility}



Nayak et al.~\cite{Nayak2020ImprovedEP} showed that any deterministic BRB algorithm that tolerates $\Theta(n)$ byzantine nodes has a communication lower bound of $\Omega(n|m|+n^2)$ even in synchrony.  
Currently there is no known lower bound on the communication for optimally resilient \emph{probabilistic} BRB algorithms. However, we believe that applying sampling and amortization is unable to break below this bound in a fully asynchronous network, in particular, any such protocol would need to operate fundamentally differently from ours.
\begin{theorem}
\label{thm: subcommittee-asynchrony}
    An optimally resilient, committee based Probabilistic BRB is not possible in full network asynchrony.
\end{theorem}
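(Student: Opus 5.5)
We argue by contradiction, with an indistinguishability (partition-style) argument in which the asynchronous adversary controls both the Byzantine nodes and the message schedule. First we pin down the informal notion. A \emph{committee based} probabilistic BRB is one in which an honest node decides that an instance has safely completed, and so calls $\textsc{Deliver}$, only after collecting acknowledgments from sampled committees. Each committee has size $n_c=o(n)$, and each acknowledgment threshold is a simple majority of its committee, as in the introduction and in \cref{sec: our algo}. This is the only way a sublinear committee can certify anything when $f=\lfloor (n-1)/3\rfloor$. A committee with an honest \emph{supermajority} would need size $\Theta(n)$, as already noted in the case study of Bracha's BRB. Since $n_c=o(n)$, any fixed polynomial collection of committees touches only $o(n)$ distinct honest nodes per certificate. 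The adversary may corrupt $f$ nodes statically and then schedule messages arbitrarily, subject only to eventual delivery.

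The core step is a two-world construction for a Byzantine sender $P_s$ that produces two messages $m\neq m'$. Partition the honest nodes into two sets $A$ and $B$ of sizes roughly $f+1$ and $f$. In world $W_1$ the adversary feeds $m$ to the honest members of each sampled committee lying in $A$, and additionally uses the Byzantine committee members. The honest members in $B$ have their messages delayed indefinitely. Each committee has an honest majority with high probability, but the honest members drawn from $A$ together with the at most $\approx n_c/3$ Byzantine members already reach the threshold $\frac{n_c}{2}+1$ with constant probability per committee. In expectation the $A$ share is $\approx n_c/3$ (since $|A|\approx n/3$), so this is a constant-probability event. In addition, the adversary can choose \emph{which} honest members' acknowledgments arrive first, and it delays the rest. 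So some node in $A$ collects certificates for $m$ and delivers $m$. World $W_2$ is symmetric with $m'$ and $B$. Alternatively, we instead reuse $A$ and split the work across the committees so that the two halves of the certifying honest members are disjoint. This is exactly what the re-send step of \cref{algo:our_algo_send} and the random scheduling assumption are designed to prevent.

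Then we merge the worlds. In a single execution $W$, the adversary runs $W_1$ toward $A$ and $W_2$ toward $B$, and delays every message crossing between $A$ and $B$ until after both deliveries. A node in $A$ sees a view identical to $W_1$ and delivers $m$, while a node in $B$ sees $W_2$ and delivers $m'$. This violates Agreement. We must also argue that Validity forces delivery within the truncated view. For this we invoke the fact that a node cannot wait for the delayed set: with $f$ possibly-crashed nodes it cannot distinguish them from slow ones. This is the standard FLP/Bracha-style argument that a node must act after hearing from $n-f$ nodes or, here, from committee thresholds.

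The main obstacle is the probability accounting. We must show that the bad event has non-negligible probability, i.e.\ larger than $\varepsilon$, rather than merely being possible. The plan is to lower-bound, for a single committee, the probability that honest members from one side plus the Byzantine members reach a simple majority. Using $|A|,|B|\approx n/3$ and an anti-concentration (central limit or Paley--Zygmund) estimate for the hypergeometric distribution, this probability is a constant bounded away from zero. If the protocol requires certificates from several committees, the adversary's scheduling power still lets it pick which honest acknowledgments count in each one. What remains is showing that the overall probability stays $\ge$ some constant $>\varepsilon$. I would handle this by keeping the number of committees each side needs bounded, or by letting the adversary steer each committee independently.
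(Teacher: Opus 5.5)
Your overall strategy is in the spirit of the paper's argument: the asynchronous scheduler decides whose acknowledgments fill each committee threshold, and a two-world split then breaks agreement. Two steps, however, do not go through as written.

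The first problem is the probability accounting, which targets the wrong estimate. You treat ``honest members from one side plus Byzantine members reach $\frac{n_c}{2}+1$'' as a merely constant-probability event per committee and reach for anti-concentration. Your fallbacks are to keep the number of committees bounded or to let the adversary steer each committee, and neither is available. The adversary is static and committee membership is random. Moreover, since $n_c=o(n)$ while $\Theta(n)$ distinct honest acknowledgers are needed, the protocol must use $\omega(1)$ committees, so a product of per-committee constants would vanish. The missing observation, which is the heart of the paper's proof, is a size comparison. Take $\mathrm{Byz}\cup B$ with $B$ honest and $|B|=f$. It has $2f$ nodes, essentially as many as the $2f+1$ honest nodes, so its expected committee proportion is $\frac{2f}{3f+1}>\frac12$. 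The very concentration that gives honest-majority committees then gives it a majority in \emph{every} committee simultaneously with probability about $1-\varepsilon$; the same holds for your $A\cup\mathrm{Byz}$. This is concentration, not anti-concentration, and it removes your ``main obstacle'' entirely.

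The second problem is that the step forcing delivery in $W_2$ fails. With $|A|=f+1$ and $|B|=f$ the worlds are not symmetric. In $W_2$ the silent honest set is $A$, of size $f+1$, so ``a node cannot wait for $f$ possibly crashed nodes'' says nothing there. Consider a protocol that keeps sampling committees until it has $2f+1$ distinct signers. It simply never delivers $m'$ in $W_2$, since all signers lie in $\mathrm{Byz}\cup B$, and your merged execution then contains no agreement violation. You need two ingredients. The first is the defining restriction of a committee-based protocol: each committee contributes only $\frac{n_c}{2}+1$ acknowledgments. So once the scheduler fills a threshold from $\mathrm{Byz}\cup B$, the late honest acknowledgments never count. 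The second is a dichotomy. Either the protocol delivers while its honest acknowledgers are confined to a set of size $f$, and then two disjoint such sets yield disagreement (\cref{lemma: authack}). Or it does not, and then the honest-sender run in which the scheduler fills every threshold from $\mathrm{Byz}\cup B$ never delivers, violating Validity. The paper's proof is exactly this combination: \cref{thm: Impossibility} confines the honest acknowledgers to a fixed set $B$ of size $f$, and \cref{lemma: authack} says $f+1$ are required.
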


\begin{definition}[Committee-based BRB]
    A BRB algorithm is regarded as Committee based, if it restricts the acquiring of $f+1$ authenticated acknowledgments from distinct honest nodes to communication with a collection of sub-linear sized node committees, where each committee contains a simple majority of honest nodes with probability at least $1-\varepsilon$.
\end{definition}

\begin{lemma}
\label{lemma: authack}
    During an execution of a BRB algorithm, a node must obtain an authenticated acknowledgment from at least $f+1$ distinct honest nodes to safely \textsc{Deliver(m)}.
\end{lemma}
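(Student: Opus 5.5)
We prove \cref{lemma: authack} by contradiction, using an indistinguishability argument in the style of standard quorum lower bounds. Suppose some honest node $P$ in a BRB execution calls \textsc{Deliver}$(m)$ after collecting authenticated acknowledgments for $m$ from at most $f$ distinct honest nodes. Every other piece of evidence $P$ holds then comes from Byzantine nodes or is unauthenticated. Call the honest acknowledgers $H_m$, so $|H_m|\le f$. We will build a second execution, indistinguishable to some honest nodes, in which Agreement or Totality fails.

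\textbf{Step 1.} Build an alternative world. Take a Byzantine sender. The adversary corrupts a set $B$ of $f$ nodes and keeps the sender among them. The sender gives $m$ only to $H_m$ and gives a conflicting $m'$ to a disjoint set $H_{m'}$ of honest nodes. Since $n-f\ge 2f+1$, we can choose $|H_{m'}|\ge f+1$ and keep it disjoint from $H_m$. The Byzantine nodes sign acknowledgments for both $m$ and $m'$. Because the network is fully asynchronous, the scheduler delays every message between $H_m\cup\{P\}$ and the remaining honest nodes.

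\textbf{Step 2.} Show that $P$ cannot tell the two worlds apart. In the first world, $P$ sees $|H_m|$ honest acknowledgments plus Byzantine acknowledgments. That view can be reproduced exactly: the Byzantine nodes replay the same signed acknowledgments, and the nodes in $H_{m'}$ stay silent, which is allowed under asynchrony. So $P$ delivers $m$ in both worlds.

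\textbf{Step 3.} Derive the contradiction. By symmetry, or by Validity/Totality applied to the $\ge f+1$ honest nodes holding $m'$, which can complete among themselves with Byzantine help, some honest node in $H_{m'}$ delivers $m'$. This violates Agreement. If the protocol instead prevents that node from delivering, then Totality forces it to eventually deliver $m$. But it only ever hears $m$ from at most $f$ honest nodes, and those nodes are indistinguishable from $f$ Byzantine nodes. This is the standard quorum intersection argument: two sets of size $\le f$ of honest acknowledgers cannot be forced to intersect in an honest node, whereas two sets of size $f+1$ drawn from $2f+1$ honest nodes must share one.

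\textbf{Main obstacle.} The delicate step is making Step 2 rigorous for probabilistic protocols. We must argue that, except with probability $\varepsilon$, the view of $P$ conditioned on the adversarial schedule is the same in both worlds. This is also where full asynchrony, rather than random asynchrony, is essential: the adversary must be able to choose which honest nodes' messages reach $P$ first. The plan is to fix the randomness (the committee seeds, which are public) and apply the indistinguishability argument pointwise. A Byzantine-only certificate, or one with $\le f$ honest signers, can then be forged in the alternative world with probability 1, so any delivery rule accepting it fails Agreement with constant rather than negligible probability.
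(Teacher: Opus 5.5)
Your proposal is essentially the paper's argument. If evidence involving at most $f$ honest acknowledgers is enough to deliver, then a disjoint set of at most $f$ honest nodes, together with the same Byzantine acknowledgments, lets a conflicting $m'$ be delivered and breaks Agreement; your two-world construction with an adversarial scheduler makes explicit the full asynchrony that the paper's short proof uses implicitly. As in the paper, the contradiction rests on the symmetry step alone. The Validity alternative fails as written because the sender is Byzantine; it would need an auxiliary world with an honest sender holding $m'$ and $H_m\cup\{P\}$ crashed, which requires $|H_m\cup\{P\}|\le f$. The Totality fallback derives no contradiction, so drop or repair those two hedges.
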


\begin{proof}
    Suppose that in the protocol, \textsc{Deliver(m)} is performed for some message $m$ when authenticated acknowledgment from at most $f$ distinct honest nodes is obtained. Then observe that there can exist another message $m'$, where \textsc{Deliver(m')} is performed when authenticated acknowledgment from a disjoint set of $f$ distinct honest nodes. This causes violation of agreement.
\end{proof}

\begin{theorem}
\label{thm: Impossibility}
    Consider a node $P_i$ sending message $m$ to uniformly sampled committees of size $n_c = o(n)$, and with probability at least $1-\varepsilon$, the committees contain a simple majority of honest nodes. 
    From each committee, $P_i$ may only receive at most $\frac{n}{2}+1$ acknowledgments.
    Then is not possible for the node to receive $f+1$ unique honest acknowledgments for the message with probability $1-\varepsilon$ in network asynchrony.  
\end{theorem}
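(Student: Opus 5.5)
We argue by an adversarial scheduling construction. In full asynchrony the adversary controls the order of message delivery, and it coordinates with the $f$ Byzantine nodes. We first invoke \cref{lemma: authack}: any safe \textsc{Deliver}$(m)$ requires authenticated acknowledgments from at least $f+1$ distinct honest nodes. By the definition of a committee-based BRB, these acknowledgments can only be gathered through the sampled committees of size $n_c=o(n)$, and from each committee only a bounded number of acknowledgments, at most a simple-majority threshold, is collected before that committee's step is considered complete. It therefore suffices to exhibit a schedule under which, with probability exceeding $\varepsilon$, the union of honest acknowledgers across all committees never reaches $f+1$ while the protocol must still make progress.

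\textbf{Step 1 (which acknowledgments count).} Fix a committee $N_c$ with $h$ honest and $b$ Byzantine members, where $h>n_c/2$ with probability at least $1-\varepsilon$. The acknowledgments are counted on a first-arrival basis, so $P_i$ stops after roughly $\frac{n_c}{2}+1$ of them. The adversary delivers all Byzantine acknowledgments first. It then delivers honest acknowledgments only from a fixed set $S$ of honest nodes that it chooses in advance, and delays all other honest messages; since delays are unbounded, they cannot be forced through before the threshold is met. Consequently the honest contributors to every committee's acknowledgment set lie inside $S\cap N_c$, with any shortfall filled by Byzantine nodes.

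\textbf{Step 2 (confining honest acknowledgers to $S$ across committees).} Choose $S$ with $|S|=f$, so that $|S|+f=2f<n-f$, and prioritize $S$ in every committee. Each committee is uniform and has size $o(n)$, so the expected number of $S\cup\text{Byz}$ members in it is roughly $\frac{2}{3}n_c$. The same Hoeffding bound used in \cref{lemma:majority-honest-committee} then shows that, with probability $1-\varepsilon$, these members already supply more than $n_c/2$ acknowledgments. On that event, nodes outside $S$ are never needed, and the total set of distinct honest acknowledgers is contained in $S$, which has size only $f<f+1$. This holds no matter how many committees are used, so the amortization across $\varphi$ committees that the protocol relies on, as in \cref{lemma:coupon-collector}, no longer helps. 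Yet each committee individually looks like a successful majority acknowledgment, so the protocol cannot wait for more without losing liveness: the nodes outside $S$ might be slow rather than faulty.

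\textbf{Step 3 (conclusion, and the main obstacle).} Combining the steps, with probability at least $1-\varepsilon$ no set of $f+1$ unique honest acknowledgments ever forms, contradicting the requirement. The hardest step is Step 2: justifying that the protocol cannot simply keep waiting until nodes outside $S$ respond. We would handle it by an indistinguishability argument. An execution in which the nodes outside $S$ are merely delayed looks identical, up to the point of decision, to one in which a different set of $f$ nodes is Byzantine and silent. Validity therefore forces the protocol to proceed on the acknowledgments from $S$, and \cref{lemma: authack} then gives the violation. We also need care that $|S\cap N_c|+b>n_c/2$ holds simultaneously for all committees, which requires a union bound as in \cref{thm:success-prob}.
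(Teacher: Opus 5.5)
Your proposal is correct and follows essentially the same route as the paper: fix a set $S$ of $f$ honest nodes, note that $S$ together with the Byzantine nodes forms a committee majority with essentially the same probability as the honest nodes do, and let the asynchronous scheduler deliver only their acknowledgments, which caps the distinct honest acknowledgers at $f$. The differences are minor. The paper obtains the majority of $S\cup\text{Byz}$ by comparing the fractions $\frac{2f}{3f+1}$ and $\frac{2f+1}{3f+1}$ directly against the hypothesis rather than re-running Hoeffding, and it leaves implicit the union bound over committees and the slow-versus-faulty indistinguishability that you spell out.
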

We only consider committees of size $o(n)$, as otherwise, there is no asymptotic improvement in communication complexity to send a message to the committee as compared to all nodes. 
\begin{proof}
    As each committee is of size $o(n)$ and $f+1$ honest nodes (i.e., $O(n)$ many nodes) are needed to achieve agreement via the quorum intersection argument, this implies that any such protocol needs to go through multiple rounds of sending messages to achieve this.

    Now, pick any arbitrary subset $B$ of $f$ honest nodes. Observe that in any committee, the expected proportion of nodes which are byzantine or among this set $B$ is $\frac{2f}{3f+1}$ (and is strictly above half of the committee). Further, the expected number of honest nodes in the committee is also $\frac{2f+1}{3f+1}$. Therefore, if there is a majority of honest nodes with probability at least $1 - \varepsilon$, there is also a majority of nodes which are byzantine or from $B$.
    
    In a fully asynchronous network, the adversary may arbitrarily delay messages to and from nodes in the committee such that messages to byzantine nodes or nodes in $B$ are sent/received, while messages to other nodes are arbitrarily delayed. This causes the set of honest nodes which is known to have acknowledged the message to be bounded at $f$. 
\end{proof}

The proof of~\cref{thm: subcommittee-asynchrony} then follows directly from~\cref{thm: Impossibility} and~\cref{lemma: authack}.


\section{Discussion on Byzantine Atomic Broadcast}

\label{sec: discussion}

A practical use case for BRB protocols is serving as a communication primitive in a Byzantine Atomic Broadcast (BAB) ~\cite{Keidar2021AllYN,Danezis2021NarwhalAT,Giridharan2022BullsharkDB} algorithm. Keidar et al.~\cite{Keidar2021AllYN} popularized a Direct Acyclic Graph (DAG) construction, utilizing it to instantiate a common-core~\cite{Canetti1993FastAB}. This construction outperforms conventional agreement-on-common set~\cite{ben-or,bkr,Cachin2001SecureAE} BAB algorithms~\cite{honeybadger,Lu2020DumboMVBAOM,Duan2018BEATAB,Liu2020EPICEA} in expected latency and communication. 

In these DAG-based BAB protocols, nodes reliably broadcast a message every logical round. Upon BRB completion, the message is considered a \textit{block.} Each block references a threshold of at least $2f+1$ blocks (from other nodes) from the previous logical round, forming a DAG. 
BAB is achieved when certain blocks are selected to be leaders by means of a global perfect coin abstraction~\cite{Boneh2001ShortSF,Libert2014BornAR,Shoup2000PracticalTS}, typically instantiated by means of a threshold signature; where leaders are then totally ordered along with blocks in its causal history (blocks which it references). Although not stated as a necessary property~\cite{Keidar2021AllYN,Danezis2021NarwhalAT,Giridharan2022BullsharkDB}, \textit{strict sequentiality} of the multi-shot BRB is implied. That is, the BRB instance in round $r$ must terminate before the one in $r+1$ may begin. 

Therefore, our question is if its possible to instantiate a similar DAG-BAB algorithm utilizing APM-BRB; A BRB that has completion sequentiality instead. A simple construction is to have the BRB instance from round $r$ reference BRB instances from round $r-\varphi$, effectively creating $\varphi$ parallel DAG instances. A recent work~\cite{Arun2024ShoalHT} utilizes a parallel DAGs and may capitalize such a construction. 

However, it is still an open question if a BAB algorithm could be instantiated while utilizing APM-BRB to have BRB instances from round $r$ reference those from round $r-1$, as out of $2f+1$ references, $f$ of them might not ever complete. These ambiguous references make it difficult to select leaders out of the blocks, and determine what blocks are definitively part its causal history. Nonetheless, we believe it is still interesting to explore this direction; as it opens the door to extremely bandwidth efficient practical deployments of BAB. 

\section{Conclusion}
\label{sec: conclusion}

In this paper, we present APM-BRB, a Byzantine Reliable Broadcast protocol that achieves optimal resilience under a random asynchronous network. By employing rotating committees and amortization across multiple broadcasts, APM-BRB achieves an amortized communication complexity of $O(n|m|+n^2k)$. When the message size dominates the metadata, i.e., $|m| = \Omega(nk)$, this simplifies to an optimal $O(n|m|)$. APM-BRB is the first BRB protocol to achieve this with optimal resilience under asynchrony.
In the common case, APM-BRB terminates in $O(n)$ rounds, but our optimistic delivery mechanism enables termination in $O(1)$ rounds under favorable conditions.
Furthermore, we show an impossibility result: An optimally resilient, committee-based BRB is not possible in full asynchrony.

\bibliography{biblo}

\appendix

\end{document}